\newtheorem{defn}{Definition}
\newtheorem{thm}{Theorem}
\newtheorem{prop}{Proposition}
\DeclarePairedDelimiter{\diagpars}{(}{)}
\newcommand{\diag}{\operatorname{diag}\diagpars}
\begin{document}

% If your paper is accepted and the title of your paper is very long,
% the style will print as headings an error message. Use the following
% command to supply a shorter title of your paper so that it can be
% used as headings.
%
%\runningtitle{I use this title instead because the last one was very long}

% If your paper is accepted and the number of authors is large, the
% style will print as headings an error message. Use the following
% command to supply a shorter version of the authors names so that
% they can be used as headings (for example, use only the surnames)
%
%\runningauthor{Surname 1, Surname 2, Surname 3, ...., Surname n}

\twocolumn[

\aistatstitle{Scalable multiscale density estimation}

\aistatsauthor{ Ye Wang \And Antonio Canale \And David Dunson}

\aistatsaddress{
Duke University\\
\texttt{eric.ye.wang@duke.edu} \\
 \And 
Universit\` a degli studi di \\
Torino e Collegio Carlo Alberto\\
\texttt{antonio.canale@unito.it} \\
 \And 
Duke University\\
\texttt{dunson@stat.duke.edu}}
]

\begin{abstract}
Although Bayesian density estimation using discrete mixtures has good
performance in modest dimensions, there is a lack of statistical and
computational scalability to high-dimensional multivariate cases.
To combat the curse of dimensionality, it is necessary
to assume the data are concentrated near a lower-dimensional subspace.
However, Bayesian methods for learning this subspace along with the
density of the data scale poorly computationally. To solve this problem,
we propose an empirical Bayes approach, which estimates a multiscale
dictionary using geometric multiresolution analysis in a first stage.
We use this dictionary within a multiscale mixture model, which allows
uncertainty in component allocation, mixture weights and scaling factors
over a binary tree. A computational algorithm is proposed, which scales
efficiently to massive dimensional problems. We provide some theoretical
support for this geometric density estimation (GEODE) method, and illustrate the performance through simulated
and real data examples.
\end{abstract}

\section{Introduction}
\label{sec:Intro}

Let $\boldsymbol{y}_{i}=(y_{i1},\ldots,y_{iD})^{T}$, for $i=1,\ldots,n$, be a sample from an unknown distribution having support in a subset of  $\Re^{D}$. We are interested in estimating its density when $D$ is large, and the data have a low-dimensional structure with intrinsic dimension $p$ such that $p\ll D$. Kernel methods work well in low dimensions, but face challenges in scaling up to large $D$ settings.  In particular, optimally one would allow separate bandwidth parameters for the different variables to accommodate differing smoothness, but then there is the issue of how to choose the high-dimensional vector of bandwidths or alternatively the kernel covariance matrix.  Clearly, cross validation involves an intractable computation cost and plugging in arbitrary values is not recommended, since bandwidth choice fundamentally impacts performance \citep{liu2007sparse}. Bayesian nonparametric models \citep{escobar1995bayesian,rasmussen1999infinite} provide an alternative approach for density estimation, specifying priors for the bandwidth parameters allowing adaptive estimation without cross-validation \citep{shen2011adaptive}.  However, inference is prohibitively costly. To scale up nonparametric Bayes inference, one can potentially rely on maximum a posteriori  (MAP) estimation \citep{ghahramani1996algorithm} or variational Bayes (VB) \citep{ghahramani1999variational}.  Issues with MAP include difficulties in efficient estimation in high-dimensions, with the EM algorithm tending to converge slowly to a local mode, and lack of characterization of uncertainty. Although VB provides an approximation to the full posterior instead of just the mode, it is well known that posterior uncertainty is substantially underestimated \citep{wang2004inadequacy} and in being implemented with EM, VB inherits the computational problems of MAP estimation. 

Manifold learning methods \citep{ tenenbaum2000global, lawrence2005probabilistic} provide computationally efficient and geometric-oriented dimension reduction, motivating an alternative way to characterize the density via a low-dimensional embedding. While most of these methods have focused on visualization, manifold Parzen windows \citep{vincent2003manifold} is a notable exception that has attempted to combine density estimation and manifold learning. The model applies dimension reduction and fits a Gaussian ``pancake" to the neighbourhood area of each data point, integrating local geometric information into a kernel density estimator. However, overfitting might come in when every data point is associated, by the same weight, with a Gaussian. Moreover, the model can be sensitive to the prior choice of intrinsic dimension $p$, and only provides a point estimate. We addressed these problems by designing an empirical Bayes nonparametric density estimator based on a set of multiscale geometric dictionaries learned at a first stage. The proposed estimator combines density estimation and manifold learning, characterizes uncertainty, scales up to problems with massive dimensions and is capable of automatically learning the intrinsic dimension. The model is illustrated through simulated and real data examples.

The remainder of the paper is organized as follows. Our geometric density estimation (GEODE), consisting of first stage dictionary learning followed by rapid Bayesian inference, is proposed in \S~\ref{Sec:MDLF}. The performance of the proposed method is tested through simulation experiments in \S~\ref{Sec:SimulationStudies} and real data applications to image inpainting data handwritten digit classification data in \S~\ref{Sec:App}. A discussion is reported in \S~\ref{Sec:Discussion}.

\section{Bayes dictionary learning in factor models}
\label{Sec:MDLF}

Assume $\boldsymbol{y}_{i}\sim \mathcal{N}_{D}(\boldsymbol{\mu},\boldsymbol{\Omega})$, with $\boldsymbol{\mu}\in\Re^{D}$ a mean vector and $\boldsymbol{\Omega} \in \Re^{D\times D}$ a covariance matrix, for $i=1,2,\dots,n$. An efficient approach to reduce  dimension when $D$ is large relies on the factor analytic decomposition $\boldsymbol{\Omega}=\boldsymbol{\Lambda}\boldsymbol{\Lambda}^{T}+\sigma^{2}\boldsymbol{I}$, where $\boldsymbol{\Lambda}$ is a $D\times p$ matrix with $p\ll D$. \citet{carvalho2008high} and \citet{bhattacharya2011sparse} (among many others) have successfully applied FA under the Bayesian paradigm while additionally assuming $\boldsymbol{\Lambda}$  sparse. The mixture of factor analyzers (MFA) model extends FA to be able to characterize non-Gaussian data.  Bayesian MFA is straightforward to implement in small dimensional problems  \citep{diebolt1994estimation,richardson1997bayesian}, but faces problems in scaling beyond a few 100 dimensions.

To simplify computation, we propose an empirical Bayes approach that avoids directly placing priors on selected parameters in the factorizations via the use of multiscale dictionary learning.

\subsection{Formulation}
\label{ssec:Formulation}

The MFA model is given by
\begin{equation}
\label{eq:MFA}
f(\boldsymbol{y}_{i}) \sim \sum_{k=1}^{K} \pi_{k}\mathcal{N}_D\big(\boldsymbol{\mu}_{k},\boldsymbol{\Lambda}_{k}\boldsymbol{\Lambda}_{k}^{T}+\sigma^{2}_{k}\boldsymbol{I}\big),
\end{equation}
where $K$ is the number of components, $\boldsymbol{\mu}_{k}\in\Re^{D}$ is a mean vector and $\pi_{k}$ is the mixing weight for the $k$th component with $\sum_{k=1}^{K} \pi_{k}=1$. The intrinsic dimension $p$ is not observable;  we start with a guess $d$ with $\boldsymbol{\Lambda}_{k}$ a $D\times d$ matrix, for $k=1,\dots,K$. Later we will discuss how we can efficiently learn $p$. MFA assumes the data are centered around multiple low\textendash{}dimensional linear subspaces $\mbox{span}({\boldsymbol{\Lambda}_{k}})$, for $k=1,\dots,K$. Let $\boldsymbol{\Phi}_{k}$ be a $D\times d$ matrix with column vectors being the basis for  $\mbox{span}({\boldsymbol{\Lambda}_{k}})$. 

For simplicity, we assume the column vectors of $\boldsymbol{\Phi}_{k}$ and the column vectors of $\boldsymbol{\Lambda}_{k}$ are in same directions. Then the MFA model can be written as
\begin{equation}
\label{eq:MDicFA}
f(\boldsymbol{y}_{i}) \sim \sum_{k=1}^{K} \pi_{k}\mathcal{N}_D\big(\boldsymbol{\mu}_{k},\boldsymbol{\Phi}_{k}\boldsymbol{\Sigma}_{k}\boldsymbol{\Phi}_{k}^{T}+\sigma^{2}_{k}\boldsymbol{I}\big),
\end{equation}
where $\boldsymbol{\Sigma}_{k}$ is a $d\times d$ positive diagonal matrix, for $k=1,\dots,K$.

If $\boldsymbol{\mu}_{k}$ and $\boldsymbol{\Phi}_{k}$ are fixed, the Bayesian learning in high dimensions is clearly greatly simplified, since instead of $\boldsymbol{\Lambda}_{k}$ and $\boldsymbol{\mu}_{k}$, only $\boldsymbol{\Sigma}_{k}$ and $\sigma^2_{k}$, for $k=1,\dots,K$, needs to be learned. However, this modification inherits from MFA the problem of choosing $K$ and $d$, and relies heavily on the quality of the pre\textendash{}learned dictionaries. To address the problem, we propose a multiscale mixture generalization based on a set of pre\textendash{}learned multiscale dictionaries $\big\{\boldsymbol{\mu}_{s,h},\boldsymbol{\Phi}_{s,h}\big\}$ where $(s,h)$ denotes the node index of a binary clustering tree. The dictionaries are obtained in a first stage using geometric multi\textendash{}resolution analysis (GMRA) \citep{allard2012multi}, which is shown to be capable of providing high\textendash{}quality basis vectors for local linear subspaces at different scales. We call this method the geometric density estimation (GEODE), which can be written as
\begin{equation}
\label{eq:MDLF}
f(\boldsymbol{y}_{i}) \sim \sum_{s,h} \pi_{s,h}\mathcal{N}_D\big(\boldsymbol{\mu}_{s,h},\boldsymbol{\Phi}_{s,h}\boldsymbol{\Sigma}_{s,h}\boldsymbol{\Phi}_{s,h}^{T}+\sigma^{2}_{s}\boldsymbol{I}\big),
\end{equation}
where $\boldsymbol{\Sigma}_{s,h}=\diag{\alpha^{2}_{s,h,1},\dots,\alpha^{2}_{s,h,d}}$. The proposed method mixes flexibly across a binary tree, both across scales and within scales in a Bayesian manner and hence tends to better capture the nonlinear structure and be more resistant to over\textendash{}fitting. Moreover, the method is capable of adaptively removing redundant dimensions and efficiently learning the true intrinsic dimension $p$. Both aspects will be demonstrated in more details later.

Borrowing the notations from \citet{allard2012multi}, $\boldsymbol{y}_{i}$, for $i=1,2,\dots,n$, are assumed to have support on $(\mathcal{M},\mathcal{F},\mu)$, where $\mathcal{M}\subset \Re^{D}$, $\mathcal{F}$ is a $\sigma$-field defined on $\mathcal{M}$ and $\mu$ is a probability measure defined on $\mathcal{F}$. With $s=0,\dots,\infty$ denoting the scale index and $h=1,\dots,2^{s}$ denoting the node index within scale $s$, the binary clustering tree is defined as follows.
\begin{defn}
\noindent  A \textbf{binary clustering tree} of a 
metric measure space $(\mathcal{M},\mathcal{F},\mu)$ is a family
of open sets in $\mathcal{M}$, $\{Cell_{s,h}\}$, called dyadic cells,
such that

1. for every $s$, $\mu(\mathcal{M}\backslash\bigcup_{h=1}^{2^{s}}Cell_{s,h})=0$;

2. for $s\leq s'$ and $1\leq h'\leq2^{s'}$, either $Cell_{s',h'}\subseteq Cell_{s,h}$
or $\mu(Cell_{s',h'}\cap Cell_{s,h})=0$;

3. for $s<s'$ and $1\leq h'\leq2^{s'}$, there exists a unique $h=1,2,\dots,2^{s}$
such that $Cell_{s',h'}\subseteq Cell_{s,h}$.
\end{defn}

To learn the multiscale dictionaries, we implement the following three steps:
\begin{enumerate}
\item Obtain a binary clustering tree, $Cell_{s,h}$ for $s=0,\dots,\infty$ and $h=1,\dots,2^{s}$ using METIS \citep{karypis1998fast}, with the proximity matrix  computed using the approximate nearest neighbour (ANN) algorithm \citep{arya1998optimal}.
\item Estimate a $d$-dimensional affine approximation in each dyadic cell $Cell_{s,h}$ using fast rank-$d$ SVD \citep{rokhlin2009randomized}, yielding a local dictionary associated to this cell, denoted $\boldsymbol{\Phi}_{s,h}$.
\item Set $\boldsymbol{\mu}_{s,h}$ equal to the sample mean of $Cell_{s,h}$. 
\end{enumerate}

To illustrate these three steps, a 4\textendash{}level binary clustering tree of a synthetic parabola point cloud obtained using GMRA can be found in the appendix. The likelihood function for the general node $(s,h)$ is
\begin{equation}
\label{eq:SingleNode}
f_{s,h}(\boldsymbol{y}_i)=\mathcal{N}_D\big(\boldsymbol{y}_{i};\boldsymbol{\mu}_{s,h},\boldsymbol{\Phi}_{s,h}\boldsymbol{\Sigma}_{s,h}\boldsymbol{\Phi}_{s,h}^{T}+\sigma^{2}_{s}\boldsymbol{I}\big).
\end{equation}
With basic linear algebra we can write (\ref{eq:SingleNode}) as
\begin{align}
\begin{split}
f_{s,h}(\boldsymbol{y}_i) \propto
 & (\sigma_{s}^{2})^{-D/2}\prod_{m=1}^{d}u_{s,h,m}^{1/2}\exp\bigg\{-\frac{1}{2}\sigma_{s}^{-2} \times \\
 &\big[A_{s,h,i}-\sum_{m=1}^{d}(1-u_{s,h,m})(Z_{s,h,i}^{(m)})^{2}\big]\bigg\},
 \end{split}
\end{align}

where $A_{s,h,i}=\tilde{\boldsymbol{y}}_{i}^{T}\tilde{\boldsymbol{y}}_{i}$, $\tilde{\boldsymbol{y}}_{i}=\boldsymbol{y}_i-\boldsymbol{\mu}_{s,h}$, $u_{s,h,m}=(1+\sigma_{s}^{-2}\alpha_{s,h,m}^{2})^{-1}\mbox{, for }m=1,\dots,d$, and $\boldsymbol{Z}_{s,h,i}=\boldsymbol{\Phi}_{s,h}^{T}\tilde{\boldsymbol{y}}_{i}$, with $Z_{s,h,i}^{(m)}$ denoting its $m$th element. Details are reported in the appendix.

We first specify a prior for the ``full'' model where $d=D$. When $p$ is small, which we expect provides a good approximation in many applications, the information contained in the last $D-p$ columns of $\boldsymbol{\Phi}_{s,h}$ (columns of $\boldsymbol{\Phi}_{s,h}$ are ordered to be descending in their singular values) is negligible and treated as noise. We use a specially tailored prior that shrinks $\alpha_m^2$ to zero more aggressively as $m$ grows; this reduces MSE by pulling the small signals towards zero.  This is equivalent to shrinking $u_m$ increasingly for larger $m$. To accomplish this adaptive shrinkage, we propose a multiplicative exponential process prior that adapts the prior of \citet{bhattacharya2011sparse}, while placing an inverse-gamma prior on $\sigma_{s}^2$, for $s=0,\dots,\infty$:
\begin{align}
\label{eq:priors}
\begin{split}
\sigma_{s}^{-2} & \sim  \mbox{Ga}(a_{\sigma},b_{\sigma})\\ 
u_{s,h,m}       & \sim \mbox{Ga}_{(0,1)}(\delta_{s,h,m}+1,1)\\
\delta_{s,h,m}  & = \prod_{k=1}^{m}\tau_{s,h,k}\\
\tau_{s,h,k}    & \sim \mbox{Exp}_{[1,\infty)}(a)
\end{split}
\end{align}
where $\tau_{s,h,k}$, for $k=1,\dots,d$, are independent truncated exponential random variables, $\delta_{s,h,m}$ and $\tau_{s,h,m}$ are the global and the local shrinkage parameter for the $m$th column vector of $\boldsymbol{\Phi}_{s,h}$, respectively. Since $\tau_{s,h,k}\geq 1$ for $k=1,\dots,D$, $\delta_{s,h,m}=\prod_{k=1}^{m}\tau_{s,h,k}$ is increasing with respect to $m$. As a result, $u_{s,h,m}$ is stochastically approaching one since the truncated gamma density concentrates around one as $\delta_{s,h,m}$ increases.

However, for large $D$ it is wasteful to conduct computation for the full model, because as $m$ increases $u_{s,h,m}$ is shrunk very strongly to one, and the excess dimensions are effectively discarded.  Hence, we propose to truncate the model by setting $u_{s,h,m}=1$ ($\alpha^2_{s,h,m}=0$) for $m>d$, with $d$ an upper bound on the number of factors.  The following theorem shows that the  approximation error of the truncated prior decreases exponentially in $d$. The proof is reported in the appendix.
\begin{thm}
\label{ParaExpanupperboundthm} Assume $\boldsymbol{\Omega}_{s,h}=\boldsymbol{\Psi}\boldsymbol{\Sigma}_{s,h}\boldsymbol{\Psi}^{T}+\sigma_{s}^{2}\boldsymbol{I}$
 where $\boldsymbol{\Psi}$ is a orthonormal $D \times D$ matrix and $\boldsymbol{\Sigma}_{s,h}$ is a $D \times D$ positive diagonal matrix. The distributions of $\boldsymbol{\Sigma}_{s,h}$ and $\sigma_{s}^{2}$ are defined in (\ref{eq:priors}). Let $\boldsymbol{\Psi}^d$ denote the first $d$ columns of $\boldsymbol{\Psi}$, $\boldsymbol{\Sigma}_{s,h}^{d}=\diag{\alpha_{s,h,1}^2,\dots,\alpha_{s,h,d}^2}$ and let $\boldsymbol{\Omega}^{d}_{s,h}=\boldsymbol{\Psi}^d\boldsymbol{\Sigma}^{d}_{s,h}\big(\boldsymbol{\Psi}^d\big)^{T}+\sigma_{s}^{2}\boldsymbol{I}$. Then
for any $\epsilon>0$,
\[
Pr\{d_{\infty}(\boldsymbol{\Omega}_{s,h},\boldsymbol{\Omega}^{d}_{s,h})>\epsilon\}<\frac{6ba^{d}}{\epsilon(1-a)}
\]
for $d>2\log\{b/\epsilon(1-a)\}/\log(1/a)$, where $d_{\infty}(\boldsymbol{\Omega}_{s,h},\boldsymbol{\Omega}^{d}_{s,h})$ is defined as $\|\boldsymbol{\Omega}_{s,h}-\boldsymbol{\Omega}^{d}_{s,h}\|_{\infty}$. $\|A\|_{\infty}$ calculates the maximum absolute row sum of the matrix $A$, $b=E(\sigma_{s}^{2})$ and a = $E(\frac{1}{\tau_{s,h,1}})$.
\end{thm}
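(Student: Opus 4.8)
The plan is to expand $\boldsymbol{\Omega}_{s,h}-\boldsymbol{\Omega}^{d}_{s,h}$ into the rank-one contributions of the discarded directions $m=d+1,\dots,D$, bound $d_{\infty}$ by the sum of the corresponding variances $\alpha_{s,h,m}^{2}$, show that this sum has expectation decaying geometrically in $d$ under the prior~(\ref{eq:priors}), and close with Markov's inequality. For the first step, let $\boldsymbol{\psi}_{m}$ denote the $m$th column of $\boldsymbol{\Psi}$; the diagonal form of $\boldsymbol{\Sigma}_{s,h}$ gives
\[
\boldsymbol{\Omega}_{s,h}-\boldsymbol{\Omega}^{d}_{s,h}=\sum_{m=d+1}^{D}\alpha_{s,h,m}^{2}\,\boldsymbol{\psi}_{m}\boldsymbol{\psi}_{m}^{T}.
\]
Since $\boldsymbol{\Psi}$ is orthonormal, every entry of each $\boldsymbol{\psi}_{m}\boldsymbol{\psi}_{m}^{T}$ has absolute value at most one, so by the triangle inequality the entries --- and hence $d_{\infty}(\boldsymbol{\Omega}_{s,h},\boldsymbol{\Omega}^{d}_{s,h})$ --- are controlled by a multiple of $\sum_{m=d+1}^{D}\alpha_{s,h,m}^{2}$. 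Taking expectations, it then suffices to bound $\sum_{m>d}E(\alpha_{s,h,m}^{2})$.

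Next I would establish the geometric decay of $E(\alpha_{s,h,m}^{2})$. From $u_{s,h,m}=(1+\sigma_{s}^{-2}\alpha_{s,h,m}^{2})^{-1}$ one has $\alpha_{s,h,m}^{2}=\sigma_{s}^{2}(1-u_{s,h,m})/u_{s,h,m}$, and since $\sigma_{s}^{2}$ is independent of the $u_{s,h,m}$ under the prior, $E(\alpha_{s,h,m}^{2})=E(\sigma_{s}^{2})\,E\bigl((1-u_{s,h,m})/u_{s,h,m}\bigr)$. Conditioning on $\delta_{s,h,m}$ and using the truncated density proportional to $u^{\delta}e^{-u}$ on $(0,1)$,
\[
E\bigl((1-u_{s,h,m})/u_{s,h,m}\,\big|\,\delta_{s,h,m}=\delta\bigr)=\frac{\int_{0}^{1}(1-u)u^{\delta-1}e^{-u}\,du}{\int_{0}^{1}u^{\delta}e^{-u}\,du}\;\le\;\frac{e}{\delta},
\]
where the bound uses $e^{-u}\le 1$ in the numerator (so it is at most $\tfrac{1}{\delta}-\tfrac{1}{\delta+1}$) and $e^{-u}\ge e^{-1}$ in the denominator (so it is at least $e^{-1}/(\delta+1)$). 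As $\delta_{s,h,m}=\prod_{k=1}^{m}\tau_{s,h,k}$ with the $\tau_{s,h,k}$ independent and at least $1$, $E(1/\delta_{s,h,m})=\prod_{k=1}^{m}E(1/\tau_{s,h,k})=a^{m}$ with $a=E(1/\tau_{s,h,1})\in(0,1)$, so $E(\alpha_{s,h,m}^{2})\le e\,b\,a^{m}$ with $b=E(\sigma_{s}^{2})$.

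Summing the geometric tail then gives $\sum_{m=d+1}^{D}E(\alpha_{s,h,m}^{2})\le eb\sum_{m>d}a^{m}=eb\,a^{d+1}/(1-a)\le 6b\,a^{d}/(1-a)$ after absorbing the absolute constants and using $a<1$; combined with the norm bound of the first step and Markov's inequality, this yields $Pr\{d_{\infty}(\boldsymbol{\Omega}_{s,h},\boldsymbol{\Omega}^{d}_{s,h})>\epsilon\}<\frac{6ba^{d}}{\epsilon(1-a)}$, and the stated lower bound on $d$ is exactly the condition that makes the right-hand side less than one, so that the bound is non-vacuous. I expect the one delicate point to be the conditional estimate $E[(1-u)/u\mid\delta]\le e/\delta$: the factor $(1-u)$ must be kept inside the numerator integral --- dropping it leaves a bound that is constant in $\delta$, hence in $m$, and the tail no longer sums --- and it must be paired with a genuine lower bound on the truncated-gamma normalizer. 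The rank-one expansion, the orthonormality estimate, the product formula for $E(1/\delta_{s,h,m})$, and the final Markov step are all routine.
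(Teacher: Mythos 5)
Your proposal is correct and its skeleton is the same as the paper's: expand $\boldsymbol{\Omega}_{s,h}-\boldsymbol{\Omega}^{d}_{s,h}$ over the discarded columns, use orthonormality to bound the discrepancy entrywise by $\sum_{m>d}\alpha_{s,h,m}^{2}$, reduce to bounding $\sum_{m>d}E(\alpha_{s,h,m}^{2})$, and finish with Markov's inequality. Where you genuinely differ is the key conditional estimate. Writing $\alpha^{2}=\sigma_{s}^{2}(1/u-1)$, the paper evaluates
\[
E\Bigl[\tfrac{1}{u}-1\,\Big|\,\delta\Bigr]=\frac{\int_{0}^{1}u^{\delta-1}e^{-u}\,du}{\int_{0}^{1}u^{\delta}e^{-u}\,du}-1
\]
essentially exactly, via integration by parts and an infinite-series identity for the lower incomplete gamma function $\gamma(\delta+1,1)$, ending with the bound $3/\delta$; you instead sandwich $e^{-u}$ between $e^{-1}$ and $1$ to get $e/\delta$ in three lines. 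Your route is shorter, avoids the series manipulation (the most delicate and error-prone part of the paper's argument), and loses essentially nothing in the constant since $e<3$. A second difference is at the end: after reaching $E\bigl(\sum_{m>d}\alpha_{m}^{2}\bigr)\le 3ba^{d}/(1-a)$, the paper detours through the inequalities $1-x/2>e^{-x}$ and $1-e^{-x}\le x$, which is where its factor $6$ and the side condition on $d$ actually get used; your direct Markov step delivers the stated bound (indeed a slightly sharper one) for every $d$, with the condition on $d$ playing no logical role beyond making the bound informative. The one caveat --- which you inherit from the paper rather than introduce --- is that both arguments control only the maximum absolute entry of the difference matrix, while $\|\cdot\|_{\infty}$ is announced as the maximum absolute row sum; reconciling the two would cost a dimension-dependent factor, but that criticism attaches to the theorem statement, not to your proof relative to the paper's.
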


We then finish the formulation of GEODE by choosing a prior for the multiscale mixing weights $\pi_{s,h}$. This prior should be structured to allow adaptive learning of the appropriate tradeoff between coarse and fine scales. Heavily favoring coarse scales may lead to reduced variance but also high bias if the coarse scale approximation is not accurate.  High weights on fine scales may lead to low bias but high variance due to limited sample size in each fine resolution component.  With this motivation, \citet{CanaleDunson2014} proposed a  multiresolution stick-breaking process generalizing usual ``flat'' stick-breaking \citep{sethuraman1991constructive}.  In particular, let
\begin{equation}
\label{eq:SandR}
S_{s,h}\sim Be(1,a_{S})\mbox{, }R_{s,h}\sim Be(b_{R},b_{R})
\end{equation}
with $S_{s,h}$ denoting the probability that the observation stops at node $(s,h)$ of a binary tree and $R_{s,h}$ denoting the probability that the observation moves down to the right from node $(s,h)$ conditioning on not stopping at node $(s,h)$. Hence
\begin{equation}
\label{eq:pi}
\pi_{s,h}=S_{s,h}\prod_{r<s}(1-S_{r,g_{s,h,r}})T_{s,h,r}
\end{equation}
where $g_{s,h,r}=\left\lceil h/2^{s-r}\right\rceil $ denotes the ancestors of node $(s,h)$ at scale $r$, $T_{s,h,r}=R_{r,g_{s,h,r}}$ if node $(r+1,g_{s,h,r+1})$ is the right daughter of node$(r+1,g_{s,h,r})$ , otherwise $T_{s,h,r}=1-R_{r,g_{s,h,r}}$. \citet{CanaleDunson2014} showed that $\sum_{s=0}^{\infty}\sum_{h=1}^{2^{s}}\pi_{s,h}=1$ almost surely for any $a_{S},b_{R}>0$. This result makes the defined weights a proper set of multiscale mixing weights. As $a_{S}$ increases, finer scales are favored, resulting in a highly non-Gaussian density.

In practice, it is appealing to approximate the model by a finite-depth multiscale mixture. Let $L$ denote this depth and let $\{\tilde{\pi}_{s,h}\}_{s\leq L}$ denote the truncated weights, which are identical to $\{\pi_{s,h}\}$ except that the stopping probabilities at scale $L$ are set to be equal to one to ensure $\sum_{s=1}^{L}\sum_{h=1}^{2^{s}}\tilde{\pi}_{s,h}=1$. The accuracy of the approximation is discussed in the following theorem. The proof is reported in the appendix.
\begin{thm}
\label{ScaleTruncationBound}Let
\[
f^{L}(\boldsymbol{y}_{i})=\sum_{s=1}^{L}\sum_{h=1}^{2^{s}}\tilde{\pi}_{s,h}\mathcal{N}_{D}(\boldsymbol{y}_{i};\boldsymbol{\mu}_{s,h},\boldsymbol{\Phi}_{s,h}\boldsymbol{\Sigma}_{s,h}\boldsymbol{\Phi}_{s,h}^{T}+\sigma_{s}^{2}\boldsymbol{I})
\]
denote the approximation at scale $L$, let $P(B)=\int_{B}f(\boldsymbol{y}_{i})dy$
and $P^{L}(B)=\int_{B}f^{L}(\boldsymbol{y}_{i})dy$, for all $B \subset \Re^D$ denote
the probability measures corresponding to density $f(\boldsymbol{y}_{i})$ and $f^{L}(\boldsymbol{y}_{i})$.
Then we have,
\[
d_{TV}(P_{L},P)<\bigg(\frac{a_{S}}{1+a_{S}}\bigg)^{L},
\]
where $d_{TV}(P_{L},P)$ denotes the total variation distance between $P_{L}(B)$ and $P(B)$.
\end{thm}
The above theorem indicates that the approximation error decays at an exponential rate.

\subsection{Posterior Computation}
\label{ssec:PosteriorCompu}

The usual frequentist method of selecting an upperbound $d$ thresholds the singular values, leading to substantial sensitivity to threshold choice.  For large $D$, the upper bound $d$ has to be chosen in advance so that fast rank-d SVD can be achieved \citep{rokhlin2009randomized}. Typically, conservative choice for $d$ is implemented in order to ensure $d\geq p$, adding a burden to both computation and storage. We avoid this by automatically deleting redundant dictionary elements, and hence decreasing $d$, as computation proceeds. To this end we adopt an adaptive Gibbs sampler similar to that developed  by \citet{bhattacharya2011sparse}. The adaptive Gibbs sampler randomly deletes redundant dimensions at $t$th iteration according to probability $p(t)=\exp(c_{0}+c_{1}t)$. The values of $c_{0}$ and $c_{1}$ are chosen to ensure frequent adaption at the beginning of the chain and an exponentially fast decay in frequency after that. We fix $c_{0}=-1$, $c_{1}=-0.005$ and $tol=10^{-4}$ as default, where $tol$ is a prespecified threshold.

Introduce the membership variables $\big(s_{i},h_{i}\big)$, then the conditional posterior is given by
\begin{align}
\label{eq:UpdateMembership}
\begin{split}
p(s_{i} = s,h_{i}=h) \propto
& \pi_{s,h} \mathcal{N}_{D}(\pmb{\mu}_{s,h},\\
& \boldsymbol{\Phi}_{s,h}\boldsymbol{\Sigma}_{s,h}\boldsymbol{\Phi}_{s,h}^{T}+\sigma_{s}^{2}\boldsymbol{I}).
\end{split}
\end{align}
A multiscale slice sampler \citep{CanaleDunson2014} could save computation when $L$ is large. The conditional posteriors of $S_{s,h}$ and $R_{s,h}$ are given by
\begin{align}
\label{eq:UpdateSH}
\begin{split}
S_{s,h} & \sim \mbox{Beta}(1+n_{s,h},a_{S}+v_{s,h}-n_{s,h}),\\
R_{s,h} & \sim \mbox{Beta}(b_{R}+r_{s,h},b_{R}+v_{s,h}-n_{s,h}-r_{s,h}),
\end{split}
\end{align}
where $v_{s,h}$ is the number of observations passing through node $(s,h)$, $n_{s,h}$ is the number of observations stopping at node $(s,h)$, and $r_{s,h}$ is the number of observations that continue to the right after passing through node $(s,h)$. The slice sampler contributes to the computation by allowing the allocation to take place in a subset of all scales of the tree, which can be efficient when we have a deep tree structure. Let $\mathcal{D}_{s,h}$ denote the set of deleted dimension indices (the deleted pool) of node $(s,h)$ and $\mathcal{R}_{s,h}$ denote the set of retained dimension indices (the remaining pool) of node $(s,h)$. Combining all the techniques discussed above, the Bayesian GEODE algorithm can be summarized as follows

\textbf{The first stage}:
\begin{enumerate}
\item Compute a multiscale dictionary $\{\pmb{\Phi}_{s,h},\pmb{\mu}_{s,h}\}$ using GMRA and initialize the algorithm.
\end{enumerate}
\textbf{The second stage}, iterate until the desired posterior sample size:
\begin{enumerate}
\item Update $s_{i}$ and $h_{i}$ for all $i$ according to (\ref{eq:UpdateMembership}).
\item Update $S_{s,h}$ and $R_{s,h}$ for all $s$ and $h$ according to  (\ref{eq:UpdateSH}).
\item Update $u_{s,h,m}$ for all $s$, $h$ and $m$ according to $\mbox{Gamma}_{(0,1)}\big(\hat{a}_{s,h,m},\hat{b}_{s,h,m}\big)$, where $\hat{a}_{s,h,m} = \prod_{k=1}^{m}\tau_{s,h,k}+n_{s,h}/2$ and $\hat{b}_{s,h,m} = 1+\frac{1}{2}\sigma_{s}^{-2}\sum_{y_{i}\in C_{s,h}}(\boldsymbol{Z}_{s,h,i}^{(m)})^{2}$.
\item Update $\tau_{s,h,m}$ for all $s$, $h$ and $m$ according to $\mbox{Exp}_{[1,\infty)}\big(\hat{\lambda}_{s,h,m}\big)$, where $\hat{\lambda}_{s,h,m} = a_{\tau}-\ln(\prod_{j>m-1}u_{s,h,j})$
\item Update $\sigma_{s}^{-2}$ for all $s$ according to
$\mbox{Gamma}\big(\hat{c}_{s}, \hat{d}_{s}\big)$, where $\hat{c}_{s} = a_{\sigma}+Dn_{s}/2$, $\hat{d}_{s} = \frac{1}{2}\sum_{y_{i}\in C_{s}}\big[A_{s,h,i}-\sum_{j=1}^{d}(1-u_{s,h,j})(\boldsymbol{Z}_{s,h,i}^{(j)})^{2}\big]+b_{\sigma}$, $C_{s}$ denotes the set of observations stopping at scale $s$, and $n_{s}$ denotes the size of $C_{s}$.
\item Compute $p(t)=\exp(c_{0}+c_{1}t)$, generate $g$ from Uniform$(0,1)$. If $g > p(t)$, go back to step 2 until the desired iteration number.
\item For all $(s,h)$, compute $r_{s,h,m}^{t}=\big(\alpha_{s,h,m}^{t}\big)^{2}/\max_{j\in\mathcal{R}_{s,h}}{\big(\alpha_{shj}^{t}\big)^{2}}$, for $m \in \mathcal{R}_{s,h}$. Remove all $m$ from $\mathcal{R}_{s,h}$ to $\mathcal{D}_{s,h}$ if $r_{s,h,m}^{t}<tol$. If no such $m$ exists, then randomly add back one dimension $m$ from $\mathcal{D}_{s,h}$ to $\mathcal{R}_{s,h}$ according to $p(m)\propto I_{m\in\mathcal{D}_{s,h}}r_{s,h,m}^{t-1}$.
\end{enumerate}

The derivation of all the conditional posteriors can be found in the supplement. Through the paper, we fix $a_{\sigma}=1/2$, $b_{\sigma}=1/2$ and $a=0.05$, and use the default parameters in the GMRA code provided by \citet{allard2012multi}.

\subsection{Missing Data Imputation} \label{SSec:MDI}
Bayesian models better utilize the partially observed data by probabilistically imputing the missing features based on its conditional posterior distribution. Notations $\boldsymbol{y}_{M}$ and $\boldsymbol{y}_{O}$ are introduced as the missing part and the observed part of $\pmb{y}$ respectively. Similarly, slightly abusing the notations, 
let $\boldsymbol{\mu}_{M}$ and $\boldsymbol{\Phi}_{M}$ denote the missing parts of $\pmb{\mu}_{s,h}$ and $\pmb{\Phi}_{s,h}$, and let  $\boldsymbol{\mu}_{O}$ and $\boldsymbol{\Phi}_{O}$ denote the observed parts. The following proposition enables efficient sampling from the conditional posterior distribution  $p(\boldsymbol{y}_{M}|\boldsymbol{y}_{O},\boldsymbol{\Theta})$, where $\boldsymbol{\Theta}$ denotes the all unknown parameters in the model. The computational analysis is provided in \S~\ref{Sec: CA} and simulation studies are provided in \S~\ref{Sec:SimulationStudies}.
\begin{prop}
\label{Prop MD}
For node $(s,h)$, introduce augmented data $\boldsymbol{\eta}_{i}$ such that $(\boldsymbol{y}_{i}|\boldsymbol{\eta}_{i},\boldsymbol{\Theta},s_{i}=s,h_{i}=h) \sim \mathcal{N}_{D}(\boldsymbol{\mu}_{s,h}+\boldsymbol{\Phi}_{s,h}\boldsymbol{\eta}_{i},\sigma_{s}^{2}\boldsymbol{\mbox{I}}_{D})$ and $(\boldsymbol{\eta}_{i}|\boldsymbol{\Theta},s_{i}=s,h_{i}=h) \sim \mathcal{N}_{d}(0,\boldsymbol{\Sigma}_{s,h})$, for $i=1,\dots,n$. Then
we have the conditional distribution with $\boldsymbol{\eta}_{i}$ marginalized out equal $(\boldsymbol{y}_{i}|\boldsymbol{\Theta},s_{i}=s,h_{i}=h) \sim \mathcal{N}_{D}(\boldsymbol{\mu}_{s,h}+\boldsymbol{\Phi}_{s,h}\boldsymbol{\Sigma}_{s,h}\boldsymbol{\Phi}_{s,h}^{T},\sigma_{s}^{2}\boldsymbol{\mbox{I}}_{D})$. Furthermore, conditional on $s_{i}=s$ and $h_{i}=h$ we have
\begin{align*}
\label{eq:prop}
\boldsymbol{\eta}_{i}|\boldsymbol{y}_{O},\boldsymbol{\Theta} & \sim \mathcal{N}_{d}(\hat{\boldsymbol{\mu}}_{\eta},\hat{\boldsymbol{\Sigma}}_{\eta}), \\
\boldsymbol{y}_{M}|\boldsymbol{\eta}_{i},\boldsymbol{y}_{O},\boldsymbol{\Theta} & \sim \mathcal{N}_{m_{i}}(\boldsymbol{\mu}_{M}+\boldsymbol{\Phi}_{M}\boldsymbol{\eta}_{i},\sigma_{s}^{2}\mbox{I}_{m_{i}}),
\end{align*}
where $\hat{\boldsymbol{\Sigma}}_{\eta}=\big(\boldsymbol{\Sigma}_{s,h}\boldsymbol{\Phi}_{O}^{T}\boldsymbol{\Phi}_{O}/\sigma_{s}^{2}+\mbox{I}\big)^{-1}\boldsymbol{\Sigma}_{s,h}$
and $\hat{\boldsymbol{\mu}}_{\eta}=\hat{\boldsymbol{\Sigma}}_{\eta}\boldsymbol{\Phi}_{O}^{T}(\boldsymbol{y}_{O}-\boldsymbol{\mu}_{O})/\sigma_{s}^{2}$.
\end{prop}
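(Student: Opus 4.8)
The claim has two parts: first, that integrating out $\boldsymbol{\eta}_i$ from the hierarchical representation recovers the marginal $\mathcal{N}_D(\boldsymbol{\mu}_{s,h}+\boldsymbol{\Phi}_{s,h}\boldsymbol{\Sigma}_{s,h}\boldsymbol{\Phi}_{s,h}^T,\sigma_s^2\mathbf{I}_D)$ in \eqref{eq:SingleNode}; and second, the two conditional laws for $\boldsymbol{\eta}_i\mid\boldsymbol{y}_O$ and $\boldsymbol{y}_M\mid\boldsymbol{\eta}_i,\boldsymbol{y}_O$. All three are statements about jointly Gaussian vectors, so the plan is simply to write down the relevant joint covariance and read off conditionals via the standard Gaussian conditioning formula.

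For the first part, I would start from $\boldsymbol{y}_i = \boldsymbol{\mu}_{s,h}+\boldsymbol{\Phi}_{s,h}\boldsymbol{\eta}_i + \boldsymbol{\varepsilon}_i$ with $\boldsymbol{\varepsilon}_i\sim\mathcal{N}_D(0,\sigma_s^2\mathbf{I}_D)$ independent of $\boldsymbol{\eta}_i\sim\mathcal{N}_d(0,\boldsymbol{\Sigma}_{s,h})$. Linearity gives $\mathbb{E}\boldsymbol{y}_i=\boldsymbol{\mu}_{s,h}$ and $\mathrm{Cov}(\boldsymbol{y}_i)=\boldsymbol{\Phi}_{s,h}\boldsymbol{\Sigma}_{s,h}\boldsymbol{\Phi}_{s,h}^T+\sigma_s^2\mathbf{I}_D$, and since an affine image of a Gaussian is Gaussian, the marginal is exactly \eqref{eq:SingleNode}. (One could equally do this by multiplying and completing the square in the densities, but the moment argument is cleaner.)

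For the second part, condition on $s_i=s,h_i=h$ throughout and consider the joint law of $(\boldsymbol{\eta}_i,\boldsymbol{y}_O)$. Both are centered (after subtracting $\boldsymbol{\mu}_O$) affine functions of $(\boldsymbol{\eta}_i,\boldsymbol{\varepsilon}_i)$, hence jointly Gaussian, with $\mathrm{Cov}(\boldsymbol{\eta}_i)=\boldsymbol{\Sigma}_{s,h}$, $\mathrm{Cov}(\boldsymbol{y}_O)=\boldsymbol{\Phi}_O\boldsymbol{\Sigma}_{s,h}\boldsymbol{\Phi}_O^T+\sigma_s^2\mathbf{I}$, and cross-covariance $\mathrm{Cov}(\boldsymbol{\eta}_i,\boldsymbol{y}_O)=\boldsymbol{\Sigma}_{s,h}\boldsymbol{\Phi}_O^T$. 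Applying the Gaussian conditioning formula directly would give $\hat{\boldsymbol{\mu}}_\eta$ and $\hat{\boldsymbol{\Sigma}}_\eta$ in terms of $\big(\boldsymbol{\Phi}_O\boldsymbol{\Sigma}_{s,h}\boldsymbol{\Phi}_O^T+\sigma_s^2\mathbf{I}\big)^{-1}$, a $|O|\times|O|$ inverse; the stated forms instead involve only a $d\times d$ inverse, so I would rewrite using the Woodbury / push-through identity $\boldsymbol{\Sigma}_{s,h}\boldsymbol{\Phi}_O^T\big(\boldsymbol{\Phi}_O\boldsymbol{\Sigma}_{s,h}\boldsymbol{\Phi}_O^T+\sigma_s^2\mathbf{I}\big)^{-1}=\big(\boldsymbol{\Sigma}_{s,h}\boldsymbol{\Phi}_O^T\boldsymbol{\Phi}_O/\sigma_s^2+\mathbf{I}\big)^{-1}\boldsymbol{\Sigma}_{s,h}\boldsymbol{\Phi}_O^T/\sigma_s^2$, which matches $\hat{\boldsymbol{\Sigma}}_\eta\boldsymbol{\Phi}_O^T/\sigma_s^2$ and gives $\hat{\boldsymbol{\mu}}_\eta$; verifying $\hat{\boldsymbol{\Sigma}}_\eta$ is the posterior covariance is the analogous identity for $\boldsymbol{\Sigma}_{s,h}-\boldsymbol{\Sigma}_{s,h}\boldsymbol{\Phi}_O^T(\cdots)^{-1}\boldsymbol{\Phi}_O\boldsymbol{\Sigma}_{s,h}$. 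Finally, $\boldsymbol{y}_M\mid\boldsymbol{\eta}_i,\boldsymbol{y}_O$: given $\boldsymbol{\eta}_i$, the likelihood makes the coordinates of $\boldsymbol{y}_i$ independent $\mathcal{N}(\mu_{\cdot}+(\boldsymbol{\Phi}_{s,h}\boldsymbol{\eta}_i)_{\cdot},\sigma_s^2)$, so $\boldsymbol{y}_O$ carries no extra information about $\boldsymbol{y}_M$ beyond $\boldsymbol{\eta}_i$, giving the stated $\mathcal{N}_{m_i}(\boldsymbol{\mu}_M+\boldsymbol{\Phi}_M\boldsymbol{\eta}_i,\sigma_s^2\mathbf{I}_{m_i})$.

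The only real obstacle is the second identity: confirming that the Woodbury rearrangement exactly produces the asymmetric-looking expression $\hat{\boldsymbol{\Sigma}}_\eta=\big(\boldsymbol{\Sigma}_{s,h}\boldsymbol{\Phi}_O^T\boldsymbol{\Phi}_O/\sigma_s^2+\mathbf{I}\big)^{-1}\boldsymbol{\Sigma}_{s,h}$ and that it is indeed the correct posterior covariance; one should also check $\boldsymbol{\Sigma}_{s,h}$ is invertible (or note the derivation goes through on the retained pool $\mathcal{R}_{s,h}$ where the diagonal entries are positive) so that the manipulations are legitimate. Everything else is bookkeeping with partitioned Gaussians.
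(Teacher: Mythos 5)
Your proposal is correct, and for the conditional of $\boldsymbol{\eta}_i$ given $\boldsymbol{y}_O$ it takes a genuinely different route from the paper. The paper works entirely at the level of densities: it writes the joint density of $(\boldsymbol{y}_O,\boldsymbol{y}_M,\boldsymbol{\eta}_i)$, splits $\|\boldsymbol{y}_i-\boldsymbol{\Phi}\boldsymbol{\eta}_i-\boldsymbol{\mu}\|^2$ into its missing and observed blocks, reads off $\boldsymbol{y}_M\mid\boldsymbol{\eta}_i,\boldsymbol{y}_O$ directly from the $M$-block, integrates out $\boldsymbol{y}_M$ to obtain $p(\boldsymbol{\eta}_i\mid\boldsymbol{y}_O,\boldsymbol{\Theta})\propto\exp\{-\|\boldsymbol{y}_O-\boldsymbol{\Phi}_O\boldsymbol{\eta}_i-\boldsymbol{\mu}_O\|^2/(2\sigma_s^2)-\boldsymbol{\eta}_i^T\boldsymbol{\Sigma}_{s,h}^{-1}\boldsymbol{\eta}_i/2\}$, and completes the square in $\boldsymbol{\eta}_i$. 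That route lands immediately on the precision form $\hat{\boldsymbol{\Sigma}}_\eta=(\boldsymbol{\Phi}_O^T\boldsymbol{\Phi}_O/\sigma_s^2+\boldsymbol{\Sigma}_{s,h}^{-1})^{-1}$, which factors as $(\boldsymbol{\Sigma}_{s,h}\boldsymbol{\Phi}_O^T\boldsymbol{\Phi}_O/\sigma_s^2+\boldsymbol{I})^{-1}\boldsymbol{\Sigma}_{s,h}$ with no Woodbury step, so the manipulation you flag as the only real obstacle simply does not arise there. Your route --- joint covariance of $(\boldsymbol{\eta}_i,\boldsymbol{y}_O)$, the partitioned-Gaussian conditioning formula, then the push-through identity to trade the $|O|\times|O|$ inverse for a $d\times d$ one --- is equally valid: the identities you invoke do hold and produce exactly the stated $\hat{\boldsymbol{\mu}}_\eta$ and $\hat{\boldsymbol{\Sigma}}_\eta$, and your version has the minor advantage of never forming $\boldsymbol{\Sigma}_{s,h}^{-1}$, whereas the paper's completion of the square implicitly assumes $\boldsymbol{\Sigma}_{s,h}$ invertible (the caveat you yourself raise, resolved by working on the retained pool where the $\alpha^2_{s,h,m}$ are positive). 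Your moment argument for the marginalization and the conditional-independence argument for $\boldsymbol{y}_M\mid\boldsymbol{\eta}_i,\boldsymbol{y}_O$ agree with the paper's in substance; the paper leaves the first essentially unproved, remarking only that Bayes' rule suffices.
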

The proposition also provides an efficient way to predict multivariate response, which is applied to image inpainting in \S~\ref{SSec:ImageInpainting}. Proof is reported in the appendix.

\section{Computational Aspects} \label{Sec: CA}
\begin{figure}
\begin{centering}
\begin{minipage}[t]{0.45\textwidth}%
\begin{center}
\includegraphics[width=1\textwidth]{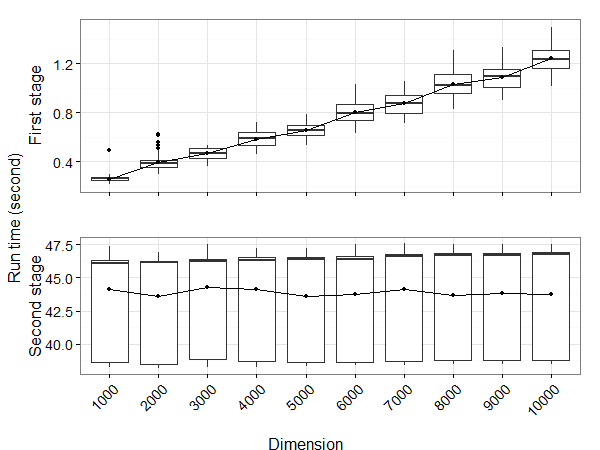}
\par\end{center}%
\end{minipage}
\par\end{centering}
\caption{Boxplot of the computational times of 100 replicate experiments at different ambient dimensions, with means jointed by segments.}
\label{fig:CTime}
\end{figure}
When data are complete, the computational cost of our implementation of GMRA is $O\big(nD(\log{n}+d^2)\big)$ \citep{allard2012multi}. The cost of computing the sufficient statistics $\big\{A_{s,h,i}\big\}$, $\big\{Z_{s,h,i}\big\}$ is $O\big(nD2^{L}d\big)$. Hence the overall cost of the first stage is given by
\[O\bigg(nD(\log{n}+d^2+2^{L}d)\bigg),\]
which only increases linearly in $D$. Letting $T$ be the total iteration number of the Gibbs sampler, the overall computational cost of the second stage  is given by
\begin{equation*}
O\bigg(T(2^{L}d^{3}+nd)\bigg),
\end{equation*}
which is independent of $D$.

When data has missing features, with $\boldsymbol{\Phi}_{O}^{T}\boldsymbol{\Phi}_{O}$ and $\boldsymbol{\Phi}_{O}^{T}(\boldsymbol{y}_{i}^{O}-\boldsymbol{\mu}_{O})$ stored as sufficient statistics, the computational cost of the first stage is given by
\[
O\bigg(nD(\log{n}+d^2+2^{L}d)+n_{m}Dd^2\bigg),
\]
and the cost of the second stage is given by
\[
O\bigg(T(2^{L}d^{3}+nd+n_{m}M)\bigg),
\]
where $n_{m}$ denotes the number of partial observations and $M=\max_{i=1,\dots,n}{m_{i}}$.

The computation time of the complete case is reported in Figure \ref{fig:CTime}, where 100 random samples were generated by projecting a 3-D Swissroll into higher dimensional ambient spaces. $d$ was set to be 10. The linearity in the first stage and the independence in the second stage with respect to D can be easily seen.

Differently from GEODE, traditional Bayesian MFA models have to learn and store the $D \times d$ factor loading matrices within each iteration in the MCMC, making both the computation and the storage daunting tasks when $D$ is very large. Moreover, due to the reduced number of parameters and lower posterior dependence in these parameters, our Gibbs sampler for GEODE converges and mixes dramatically faster than MCMC algorithms for fully Bayesian MFA models. This reduces the number of samples needed; we run the sampler 1,000 iterations, with the first 500 as a burn-in.  Experimental results show convergence typically occurs very fast. 

Note that all data experiments in the paper were run in matlab version 2012a on a x86\_64 linux machine with a $8\times 3.40$ GHz Intel(R) Core(TM) i7-3770 processor. Furthermore, note that our Gibbs sampler is written in matlab and hence the computing time of the second stage could be greatly reduced using lower level languages.

\begin{figure}
\begin{centering}
\begin{minipage}[t]{0.45\textwidth}%
\begin{center}
\includegraphics[width=1\textwidth]{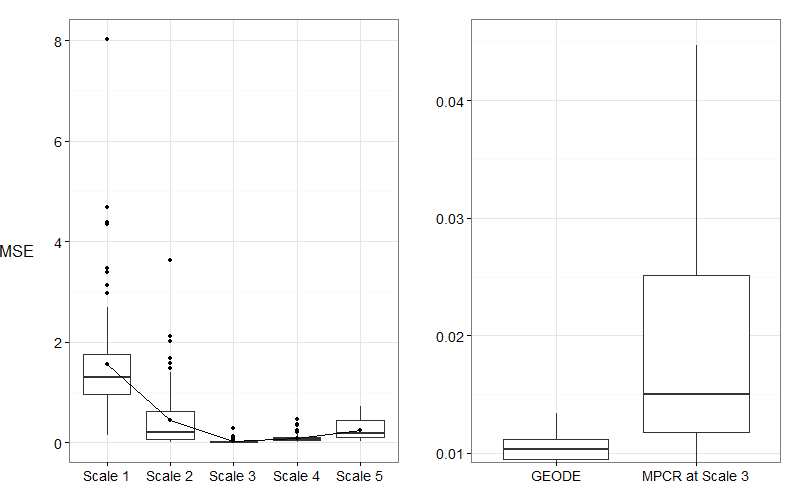}
\par\end{center}%
\end{minipage}
\par\end{centering}
\caption{\textbf{Left}: Boxplot of predictive MSE of MPCR at different scales; \textbf{Right}: Predictive MSE of GEODE compared with the best MPCR can do.}
\label{fig:Smooth}
\end{figure}

\section{Simulation Studies} \label{Sec:SimulationStudies}
To demonstrate GEODE, several simulation studies were conducted. Our aim is to highlight several characteristics of the approach: the improved quality by mixing over different scales, the ability to learn the true intrinsic dimension or a tight upperbound, the ability to impute missing data and the accurate characterization of uncertainty. Through the simulation studies, $d$ was set to be 10, providing an upper bound on the intrinsic dimension.  The method is not sensitive to the choice of this upper bound.
\subsection{Smoothness Adaptation} \label{SSec:Mix}
By mixing across different scales, GEODE is able to tradeoff between coarser scales and finer scales in a Bayesian manner adapting to the local smoothness.  To see this, a multi-scale principal component regression (MPCR) based on GMRA is proposed and compared with GEODE. The MPCR, being a natural combination of GMRA and principal component regression (PCR), learns local regression coefficients by applying PCR to subsets of observations at each node within a specific level. The prediction is made by first assigning the data point to the node  closest to this data point in terms of Euclidean distance to the center, and then predicting using the local regression coefficients. It is a natural comparison to GEODE since both use the same binary tree structure and the same multiscale dictionaries. MPCR predicts based on all the nodes within a specific scale while GEODE mixes over all scales.
\begin{figure}[H]
\begin{centering}
\begin{minipage}[t]{0.45\textwidth}%
\begin{center}
\includegraphics[width=1\textwidth]{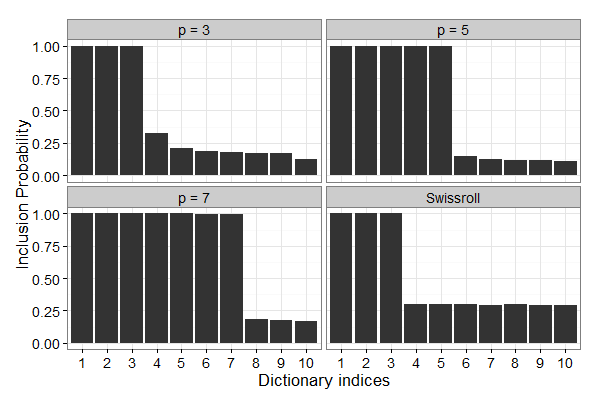}
\par\end{center}%
\end{minipage}
\par\end{centering}
\caption{Average inclusion probabilities for each dimension under different scenarios, with 10 being an upper bound.}
\label{fig:InlProb}
\end{figure}
In the simulation study, 100 independent samples with 1100 observations were generated from a mixture of three Gaussians with $D=10000$, whose intrinsic dimensions equal $3$, $5$ and $7$ respectively. In each sample, 1000 observations were randomly selected to train the model, and the other 100 were used as test data. One dimension of the test data is assumed to be missing and to be predicted. Performance of GEODE is compared with that of MPCR in terms of the mean square prediction error, which is shown in Figure \ref{fig:Smooth}. The MSE curve of MPCR is u-shaped, indicating over-fitting at fine scale. MDLR clearly outperforms MPCR even under ideal conditions for MPCR. This suggests that GEODE efficiently utilized the local smoothness information , while adaptively borrowing information across scales.

\subsection{Intrinsic Dimension Learning} \label{SSec:IDL}
The adaptive Gibbs sampler automatically excludes unnecessary dimensions.  The posterior mean inclusion probabilities are useful in estimating the true intrinsic dimension.  These probabilities were computed under each simulation case, with results shown in Figure \ref{fig:InlProb}. In the Gaussian mixture case, GEODE successfully learned the true p, with the redundant dimensions excluded with more than 70\% probability, saving computation and storage.  For the Swissroll example, GEODE instead provided a tight upperbound for the true $p$.

\subsection{Regression With Missing Data} \label{SSec:Regression}

In this simulation study, 100 independent samples are generated from 9 different scenarios involving Gaussian or manifold (Swissroll) data, different ambient dimensions $D$ and different intrinsic dimensions $p$. GEODE was compared with competing methods in regression problems either with or without missing data. Scenarios 1 - 6 are linear Gaussian data and scenarios 7-9 are Swissroll data embedded in high dimensional ambient spaces. Simulation details are reported in the appendix. 

For Gaussian data, our method is compared with elastic net (EN) and PCR. For Swissroll data, our method is compared with random forest (RF). To make the computation of PCR and RF feasible for our studies, fast rank-k SVD was applied in both cases. RF was applied after the data have been projected to a 10 dimensional space using fast SVD. As can be seen from Figure \ref{fig:MSE}, GEODE has a consistently better predictive accuracy than the competing methods. Moreover, GEODE successfully imputed the missing data while maintaining similar MSE in the presence of missing data, while methods that discard observations with missing data have clearly increased MSE. Empirical 95\% coverages of intervals out of sample are presented in Figure \ref{fig:CovInclu}. As can be seen, GEODE only slightly underestimated uncertainty.

The results demonstrated the capability of the proposed method to properly characterize uncertainty and impute missing data, while maintaining computational efficiency and accurate predictions.
\begin{figure}
\begin{centering}
\begin{minipage}[t]{0.45\textwidth}%
\begin{center}
\includegraphics[width=1\textwidth]{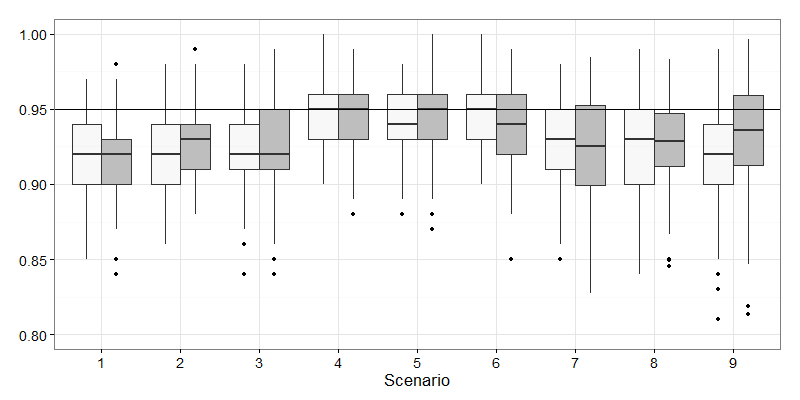}
\par\end{center}%
\end{minipage}
\par\end{centering}
\caption{Boxplots of the empirical coverages of 100 replicate experiments, with fully observed datasets denoted by light grey and partially observed ones denoted by dark grey.}
\label{fig:CovInclu}
\end{figure}

\section{Application} \label{Sec:App}
GEODE is further demonstrated first in a multivariate response regression application and then in a supervised classification problem. In both applications, $d=20$. Increasing $d$ moderately had essentially no impact on the results.
\subsection{Image Inpainting}
\label{SSec:ImageInpainting}
The Frey faces data \citep{roweis2002global} contains 1965 $20\times28$ video frames of a single face with different expressions. Conducting the same experiment as done by \citet{titsias2010bayesian}, the data set is randomly split into 1000 training images and 965 testing images with a random half of the pixels missing. GEODE was trained for less than 2 minutes, and reconstruction  (prediction) of all 965 testing images was done in less than 10 minutes. The mean absolute reconstruction error of GEODE is 7.04, which outperforms the error of 7.40 reported by \citet{titsias2010bayesian}. 10 randomly selected reconstructions are shown on the left in Figure \ref{fig:Inpainting}, with 4 manually designed missingness cases shown on the right. GEODE also outperforms the results shown by \citet{adams-wallach-ghahramani-2010a} by looking at their visualized results.  It is also noted that \citet{adams-wallach-ghahramani-2010a} reported a few hours of computational time in reconstructing 100 images based on 1865 training images.

\begin{figure*}[t]
\begin{centering}
\begin{minipage}[t]{0.8\textwidth}%
\begin{center}
\includegraphics[width=1\textwidth]{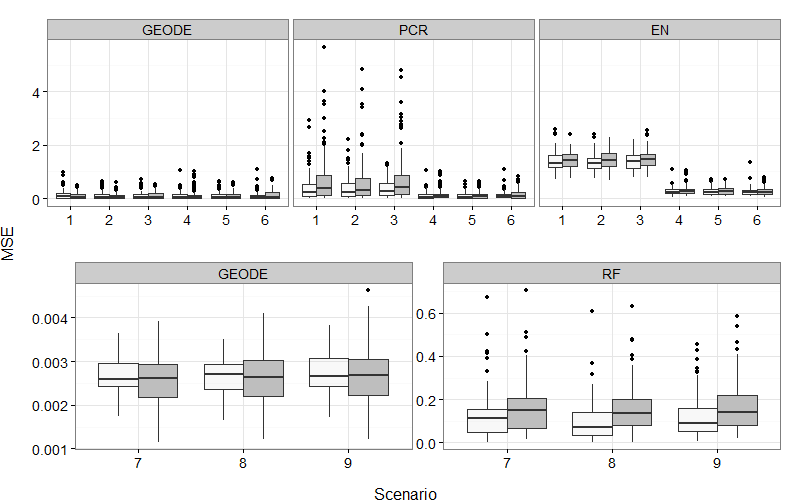}
\par\end{center}%
\end{minipage}
\par\end{centering}
\caption{Comparison of performance between GEODE and other methods with respect to MSE, the vertical bars represent the 95\% empirical intervals, with fully observed datasets denoted by light grey and partially observed ones denoted by dark grey.}
\label{fig:MSE}
\end{figure*}

\begin{figure*}
\begin{centering}
\begin{minipage}[t]{0.8\textwidth}%
\begin{center}
\includegraphics[width=1\textwidth]{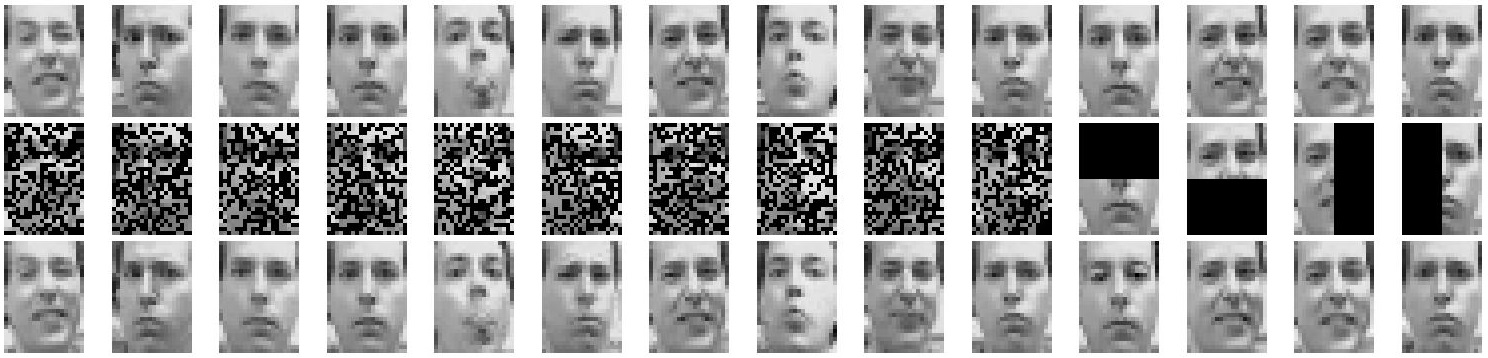}
\par\end{center}%
\end{minipage}
\par\end{centering}
\caption{The first row shows the original images, second row shows the images with pixels missing, and the third row shows the reconstructed images.}
\label{fig:Inpainting}
\end{figure*} 
\subsection{Digit Classification}\label{SSec:Digit}
GEODE was used as a probabilistic classifier for the MNIST handwritten data, which contains 70000 $28\times 28$ grey scale handwritten digits images. First, one GEODE was trained for each of the 10 digits over a total of 60000 training data for around 90 minutes. Then within each iteration of the Gibbs sampler, the 10 GEODE's worked in a Naive Bayes way and generated a likely class. The ``voting'' process took 7 minutes for 10000 testing images and the mode of these votes were computed as the classification results. The classification error was 2.32\%.

\section{Discussion}
\label{Sec:Discussion}

In many applications, high-dimensional data with unknown joint distribution are collected. Despite the dramatic importance of learning the joint distribution of such data, few probabilistic methods that scale well to high-dimension and provide an adequate characterization of uncertainty are available.   Bayesian nonparametric methods based on mixtures of multivariate Gaussian kernels are widely used, but face major bottlenecks in scaling to higher dimensions.  To tackle this problem, we proposed an empirical Bayes density estimator combining manifold learning and Bayesian nonparametric density estimation.  One of the building blocks of our method focuses on single Gaussian factor decomposition in which variables are linearly related, showing excellent performance in scaling computationally and in generalization error, while providing a valid characterization of uncertainty in predictions. The other building block is a multiscale mixture generalization, which accommodates unknown density, nonlinear relationships and nonlinear subspaces. This approach showed excellent performance in inferring the subspace dimension, estimating the subspace, and characterizing the joint density of the data in the ambient space.  The proposed methods are broadly applicable to many learning problems including regression or classification with missing features.

\bibliographystyle{plainnat}
\bibliography{references}

\onecolumn
\appendix
\begin{center}
\huge{\textbf{Appendix}}
\end{center}
\section{Formulation}

To illustrate the binary clustering tree, a 4\textendash{}level binary clustering tree of a synthetic parabola point cloud obtained using GMRA can be found in Figure \ref{fig:TreeStructureDescription}.

The likelihood function of GEODE can be written as
\begin{align}
\label{eq:Likelihood}
\tag{A1}
\begin{aligned}
f_{s,h}(\boldsymbol{y}_i) \propto 
&(\sigma_{s}^{2})^{-D/2}\prod_{m=1}^{d}u_{s,h,m}^{1/2}\times \exp\bigg\{-\frac{1}{2}\sigma_{s}^{-2}\\
&\big[A_{s,h,i}-\sum_{m=1}^{d}(1-u_{s,h,m})(Z_{s,h,i}^{(m)})^{2}\big]\bigg\},
\end{aligned}
\end{align}
which can be derived using the following two propositions.
\begin{prop}
\label{Quadratic}$\boldsymbol{\Sigma}=diag(\alpha_{1}^{2},\dots,\alpha_{d}^{2})$
is a $d\times d$ matrix with all diagonal
entries larger than $0$, $\boldsymbol{\Phi}$ is a $D\times d$ orthonormal matrix,
we have,
\begin{equation*}
(\sigma^{2}\boldsymbol{I}+\boldsymbol{\Phi}\boldsymbol{\Sigma}\boldsymbol{\Phi}{}^{T})=\sigma^{-2}\boldsymbol{I}-\sigma^{-4}\boldsymbol{\Phi}\tilde{\boldsymbol{\Sigma}}\boldsymbol{\Phi}^{T},
\end{equation*}
where $\tilde{\boldsymbol{\Sigma}}=diag(\frac{\alpha_{1}^{2}}{1+\sigma^{-2}\alpha_{1}^{2}},\frac{\alpha_{2}^{2}}{1+\sigma^{-2}\alpha_{2}^{2}},\dots,\frac{\alpha_{d}^{2}}{1+\sigma^{-2}\alpha_{d}^{2}})$.\end{prop}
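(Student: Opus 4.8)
The plan is to read the claimed formula as the statement that $\sigma^{-2}\boldsymbol{I}-\sigma^{-4}\boldsymbol{\Phi}\tilde{\boldsymbol{\Sigma}}\boldsymbol{\Phi}^{T}$ is the \emph{inverse} of $\sigma^{2}\boldsymbol{I}+\boldsymbol{\Phi}\boldsymbol{\Sigma}\boldsymbol{\Phi}^{T}$ (the display as printed omits the $(\cdot)^{-1}$), and to prove it by direct verification rather than by quoting a matrix-inversion lemma, so the appendix stays self-contained.

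The single structural fact I need is that the columns of $\boldsymbol{\Phi}$ are orthonormal, hence $\boldsymbol{\Phi}^{T}\boldsymbol{\Phi}=\boldsymbol{I}_{d}$ — note this is not the same as $\boldsymbol{\Phi}\boldsymbol{\Phi}^{T}$, which is only a rank-$d$ projection since $D>d$. I would then expand
\[
(\sigma^{2}\boldsymbol{I}+\boldsymbol{\Phi}\boldsymbol{\Sigma}\boldsymbol{\Phi}^{T})(\sigma^{-2}\boldsymbol{I}-\sigma^{-4}\boldsymbol{\Phi}\tilde{\boldsymbol{\Sigma}}\boldsymbol{\Phi}^{T})
=\boldsymbol{I}+\sigma^{-2}\boldsymbol{\Phi}\big(\boldsymbol{\Sigma}-\tilde{\boldsymbol{\Sigma}}-\sigma^{-2}\boldsymbol{\Sigma}\tilde{\boldsymbol{\Sigma}}\big)\boldsymbol{\Phi}^{T},
\]
where the collapse of the $\boldsymbol{\Phi}\boldsymbol{\Sigma}\boldsymbol{\Phi}^{T}\boldsymbol{\Phi}\tilde{\boldsymbol{\Sigma}}\boldsymbol{\Phi}^{T}$ term uses $\boldsymbol{\Phi}^{T}\boldsymbol{\Phi}=\boldsymbol{I}_{d}$. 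Since $\boldsymbol{\Sigma}$ and $\tilde{\boldsymbol{\Sigma}}$ are both diagonal, the bracketed matrix is diagonal with $m$th entry $\alpha_{m}^{2}-\tilde{\alpha}_{m}^{2}-\sigma^{-2}\alpha_{m}^{2}\tilde{\alpha}_{m}^{2}$, where $\tilde{\alpha}_{m}^{2}=\alpha_{m}^{2}/(1+\sigma^{-2}\alpha_{m}^{2})$, and a one-line computation shows this equals $\alpha_{m}^{2}-\tilde{\alpha}_{m}^{2}(1+\sigma^{-2}\alpha_{m}^{2})=\alpha_{m}^{2}-\alpha_{m}^{2}=0$. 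Hence the product is $\boldsymbol{I}$; the product in the opposite order is handled by the identical computation, so the two matrices are mutual inverses, with well-definedness of $\tilde{\boldsymbol{\Sigma}}$ and of the relevant inverses guaranteed by $\alpha_{m}^{2}>0$ and $\sigma^{2}>0$.

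I do not expect any real obstacle: the result is the Sherman--Morrison--Woodbury identity specialized to $A=\sigma^{2}\boldsymbol{I}$, $U=\boldsymbol{\Phi}$, $C=\boldsymbol{\Sigma}$, $V=\boldsymbol{\Phi}^{T}$, and one could equally well derive rather than verify the formula, the inner capacitance matrix $\boldsymbol{\Sigma}^{-1}+\sigma^{-2}\boldsymbol{\Phi}^{T}\boldsymbol{\Phi}=\boldsymbol{\Sigma}^{-1}+\sigma^{-2}\boldsymbol{I}$ being diagonal with inverse exactly $\tilde{\boldsymbol{\Sigma}}$. The only things to watch are the $\boldsymbol{\Phi}^{T}\boldsymbol{\Phi}$ versus $\boldsymbol{\Phi}\boldsymbol{\Phi}^{T}$ distinction and the missing inverse symbol in the printed display.
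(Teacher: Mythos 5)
Your proof is correct and takes essentially the same route as the paper's: the paper quotes the matrix inversion (Woodbury) formula and uses $\boldsymbol{\Phi}^{T}\boldsymbol{\Phi}=\boldsymbol{I}_{d}$ to collapse the capacitance matrix to the diagonal $\boldsymbol{I}+\sigma^{-2}\boldsymbol{\Sigma}$, whose inverse times $\boldsymbol{\Sigma}$ is exactly $\tilde{\boldsymbol{\Sigma}}$, while your direct multiplication is just a self-contained verification of that same identity hinging on the same two facts ($\boldsymbol{\Phi}^{T}\boldsymbol{\Phi}=\boldsymbol{I}_{d}$ and the diagonal cancellation $\alpha_{m}^{2}-\tilde{\alpha}_{m}^{2}(1+\sigma^{-2}\alpha_{m}^{2})=0$). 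You are also right that the displayed equation is missing the $(\cdot)^{-1}$ on the left-hand side; the paper's own proof computes the inverse, so this is a typo in the proposition as stated.
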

\begin{proof}
By the orthonormality of the dictionary, we have $\boldsymbol{\Phi}^{T}\boldsymbol{\Phi}=\boldsymbol{I}_{d}$. And by the matrix inversion formula, 
\begin{eqnarray}
(\sigma^{2}I+\boldsymbol{\Phi}\boldsymbol{\Sigma}\boldsymbol{\Phi}^{T})^{-1} & = & \sigma^{-2}\boldsymbol{I}-\sigma^{-4}\boldsymbol{\Phi}(\boldsymbol{I}+\sigma^{-2}\boldsymbol{\Sigma}\boldsymbol{\Phi}^{T}\boldsymbol{\Phi})^{-1}\boldsymbol{\Sigma}\boldsymbol{\Phi}^{T}\nonumber \\
 & = & \sigma^{-2}\boldsymbol{I}-\sigma^{-4}\boldsymbol{\Phi}(\boldsymbol{I}+\sigma^{-2}\boldsymbol{\Sigma})^{-1}\boldsymbol{\Sigma}\boldsymbol{\Phi}^{T}\nonumber \\
 & = & \sigma^{-2}\boldsymbol{I}-\sigma^{-4}\boldsymbol{\Phi}\tilde{\boldsymbol{\Sigma}}\boldsymbol{\Phi}^{T}\nonumber
\end{eqnarray}
\end{proof}
\begin{prop}
\label{determ}Under the same setting of Proposition  \ref{Quadratic},
we have
\begin{equation*}
|\sigma^{2}\boldsymbol{I}+\boldsymbol{\Phi}\boldsymbol{\Sigma}\boldsymbol{\Phi}^{T}|^{-1/2}=(\sigma^{2})^{-D/2}\prod_{m=1}^{d}(\frac{1}{1+\sigma^{-2}\alpha_{m}^{2}})^{1/2}.
\end{equation*}
\end{prop}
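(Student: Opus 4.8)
The plan is to reduce the $D\times D$ determinant to a $d\times d$ one by Sylvester's determinant identity (the matrix determinant lemma), exactly parallel to the way the proof of Proposition~\ref{Quadratic} used the Woodbury formula for the inverse. Writing $A=\sigma^{2}\boldsymbol{I}_{D}$, I would apply $|A+\boldsymbol{\Phi}\boldsymbol{\Sigma}\boldsymbol{\Phi}^{T}|=|A|\,|\boldsymbol{\Sigma}|\,|\boldsymbol{\Sigma}^{-1}+\boldsymbol{\Phi}^{T}A^{-1}\boldsymbol{\Phi}|$, which is legitimate since $\boldsymbol{\Sigma}=\diag{\alpha_{1}^{2},\dots,\alpha_{d}^{2}}$ is invertible (all $\alpha_{m}^{2}>0$).

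First I would note that $|A|=(\sigma^{2})^{D}$ and, using the orthonormality $\boldsymbol{\Phi}^{T}\boldsymbol{\Phi}=\boldsymbol{I}_{d}$ (as in the proof of Proposition~\ref{Quadratic}), that $\boldsymbol{\Phi}^{T}A^{-1}\boldsymbol{\Phi}=\sigma^{-2}\boldsymbol{I}_{d}$. Hence the $d\times d$ factor becomes $|\boldsymbol{\Sigma}|\,|\boldsymbol{\Sigma}^{-1}+\sigma^{-2}\boldsymbol{I}_{d}|=|\boldsymbol{I}_{d}+\sigma^{-2}\boldsymbol{\Sigma}|=\prod_{m=1}^{d}(1+\sigma^{-2}\alpha_{m}^{2})$, the last step because $\boldsymbol{I}_{d}+\sigma^{-2}\boldsymbol{\Sigma}$ is diagonal. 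Combining, $|\sigma^{2}\boldsymbol{I}+\boldsymbol{\Phi}\boldsymbol{\Sigma}\boldsymbol{\Phi}^{T}|=(\sigma^{2})^{D}\prod_{m=1}^{d}(1+\sigma^{-2}\alpha_{m}^{2})$, and raising both sides to the power $-1/2$ yields the claimed identity.

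An equivalent route would be to complete $\boldsymbol{\Phi}$ to a full orthonormal basis $[\boldsymbol{\Phi}\;\boldsymbol{\Phi}_{\perp}]$ of $\Re^{D}$; in this basis $\sigma^{2}\boldsymbol{I}+\boldsymbol{\Phi}\boldsymbol{\Sigma}\boldsymbol{\Phi}^{T}$ is block diagonal with blocks $\sigma^{2}\boldsymbol{I}_{d}+\boldsymbol{\Sigma}$ and $\sigma^{2}\boldsymbol{I}_{D-d}$, so its determinant equals $\big(\prod_{m=1}^{d}(\sigma^{2}+\alpha_{m}^{2})\big)(\sigma^{2})^{D-d}$, which rearranges to the same expression. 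There is essentially no serious obstacle here; the only points requiring care are matching the matrix shapes when invoking Sylvester's identity and keeping the bookkeeping of $(\sigma^{2})^{D}$ versus $(\sigma^{2})^{D-d}$ straight. Together with Proposition~\ref{Quadratic}, this determinant computation plugs directly into the Gaussian density and produces the likelihood expression~(\ref{eq:Likelihood}).
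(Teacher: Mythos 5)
Your proof is correct and follows essentially the same route as the paper: both reduce the $D\times D$ determinant to a $d\times d$ one via Sylvester's determinant identity (the paper applies it in the form $|\boldsymbol{I}_{D}+\sigma^{-2}\boldsymbol{\Phi}\boldsymbol{\Sigma}\boldsymbol{\Phi}^{T}|=|\boldsymbol{I}_{d}+\sigma^{-2}\boldsymbol{\Sigma}^{1/2}\boldsymbol{\Phi}^{T}\boldsymbol{\Phi}\boldsymbol{\Sigma}^{1/2}|$, you via the matrix determinant lemma) and then invoke $\boldsymbol{\Phi}^{T}\boldsymbol{\Phi}=\boldsymbol{I}_{d}$ to get $\prod_{m=1}^{d}(1+\sigma^{-2}\alpha_{m}^{2})$. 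Your alternative block-diagonalization argument is a clean, more elementary backup, but the main computation matches the paper's.
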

\begin{proof}
By Theorem Schur's formula,
\begin{eqnarray}
|\sigma^{2}\boldsymbol{I}+\boldsymbol{\Phi}\boldsymbol{\Sigma}\boldsymbol{\Phi}^{T}|^{-1/2} & = & (\sigma^{2})^{-D/2}|\boldsymbol{I}_{D}+\sigma^{-2}\boldsymbol{\Phi}\boldsymbol{\Sigma}\boldsymbol{\Phi}^{T}|^{-1/2}\nonumber \\
 & = & (\sigma^{2})^{-D/2}|\boldsymbol{I}_{d}+\sigma^{-2}\boldsymbol{\Sigma}^{1/2}\boldsymbol{\Phi}^{T}\boldsymbol{\Phi}\boldsymbol{\Sigma}^{1/2}|^{-1/2}\nonumber \\
 & = & (\sigma^{2})^{-D/2}|\boldsymbol{I}_{d}+\sigma^{-2}\boldsymbol{\Sigma}|\nonumber \nonumber\\
 & = & (\sigma^{2})^{-D/2}\prod_{m=1}^{d}(\frac{1}{1+\sigma^{-2}\alpha_{m}^{2}})^{1/2} \nonumber
\end{eqnarray}

\end{proof}

\begin{thm}
\label{ParaExpanupperboundthm} Assume $\boldsymbol{\Omega}_{s,h}=\boldsymbol{\Psi}\boldsymbol{\Sigma}_{s,h}\boldsymbol{\Psi}^{T}+\sigma_{s}^{2}\boldsymbol{I}$
 where $\boldsymbol{\Psi}$ is a orthonormal $D \times D$ matrix and $\boldsymbol{\Sigma}_{s,h}$ is a $D \times D$ positive diagonal matrix. The distributions of $\boldsymbol{\Sigma}_{s,h}$ and $\sigma_{s}^{2}$ are defined in (6) and (7) in the submitted paper. Let $\boldsymbol{\Psi}^d$ denote the first $d$ columns of $\boldsymbol{\Psi}$, $\boldsymbol{\Sigma}_{s,h}^{d}=\diag{\alpha_{s,h,1}^2,\dots,\alpha_{s,h,d}^2}$ and let $\boldsymbol{\Omega}^{d}_{s,h}=\boldsymbol{\Psi}^d\boldsymbol{\Sigma}^{d}_{s,h}(\boldsymbol{\Psi}^d)^{T}+\sigma_{s}^{2}\boldsymbol{I}$. Then
for any $\epsilon>0$,
\[
Pr\{d_{\infty}(\boldsymbol{\Omega}_{s,h},\boldsymbol{\Omega}^{d}_{s,h})>\epsilon\}<\frac{6ba^{d}}{\epsilon(1-a)}
\]
for $d>2\log\{b/\epsilon(1-a)\}/\log(1/a)$, where $d_{\infty}(\boldsymbol{\Omega}_{s,h},\boldsymbol{\Omega}^{d}_{s,h})$ is defined as $\|\boldsymbol{\Omega}_{s,h}-\boldsymbol{\Omega}^{d}_{s,h}\|_{\infty}$. $\|A\|_{\infty}$ calculates the maximum absolute row sum of the matrix $A$, $b=E(\sigma_{s}^{2})$ and a = $E(\frac{1}{\tau_{s,h,1}})$.
\end{thm}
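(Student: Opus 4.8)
The plan is to reduce the claim to a bound on the prior mean of $\sum_{m>d}\alpha_{s,h,m}^{2}$ and then apply Markov's inequality. First I would write the error matrix explicitly: $\boldsymbol{\Omega}_{s,h}-\boldsymbol{\Omega}^{d}_{s,h}=\sum_{m=d+1}^{D}\alpha_{s,h,m}^{2}\,\boldsymbol{\psi}_{m}\boldsymbol{\psi}_{m}^{T}$, where $\boldsymbol{\psi}_{m}$ is the $m$th column of $\boldsymbol{\Psi}$. Since $\boldsymbol{\Psi}$ is orthonormal, each $\boldsymbol{\psi}_{m}$ is a unit vector and in particular every entry of $\boldsymbol{\psi}_{m}$ is at most one in absolute value, so a direct estimate controls the maximum absolute row sum of this sum of rank-one matrices by $C_{1}\sum_{m=d+1}^{D}\alpha_{s,h,m}^{2}$ for an absolute constant $C_{1}$. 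Thus $d_{\infty}(\boldsymbol{\Omega}_{s,h},\boldsymbol{\Omega}^{d}_{s,h})\le C_{1}\sum_{m=d+1}^{D}\alpha_{s,h,m}^{2}$, and by Markov's inequality and Tonelli's theorem it suffices to bound $\sum_{m=d+1}^{D}E(\alpha_{s,h,m}^{2})$.

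For the moment computation I would use the reparametrization implicit in the model. From $u_{s,h,m}=(1+\sigma_{s}^{-2}\alpha_{s,h,m}^{2})^{-1}$ we get $\sigma_{s}^{-2}\alpha_{s,h,m}^{2}=(1-u_{s,h,m})/u_{s,h,m}$, hence $\alpha_{s,h,m}^{2}=\sigma_{s}^{2}(1-u_{s,h,m})/u_{s,h,m}$; and since in the prior $\sigma_{s}^{2}$ is independent of the $u_{s,h,m}$ (the latter depend only on the $\tau_{s,h,k}$), we have $E(\alpha_{s,h,m}^{2})=E(\sigma_{s}^{2})\,E\!\big((1-u_{s,h,m})/u_{s,h,m}\big)=b\,E\!\big((1-u_{s,h,m})/u_{s,h,m}\big)$. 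Conditioning on $\delta_{s,h,m}$, for $u\sim\mbox{Ga}_{(0,1)}(\delta+1,1)$ the quantity $E\!\big((1-u)/u\mid\delta\big)$ is the ratio of $\int_{0}^{1}x^{\delta-1}(1-x)e^{-x}dx$ to $\int_{0}^{1}x^{\delta}e^{-x}dx$; bounding the numerator above by the Beta integral $\int_{0}^{1}x^{\delta-1}(1-x)dx=1/\{\delta(\delta+1)\}$ and the denominator below by $e^{-1}\int_{0}^{1}x^{\delta}dx=e^{-1}/(\delta+1)$ gives $E\!\big((1-u)/u\mid\delta\big)\le e/\delta$. Because $\tau_{s,h,k}\ge1$ forces $\delta_{s,h,m}=\prod_{k=1}^{m}\tau_{s,h,k}\ge1$, this estimate is valid for every realization, so $E\!\big((1-u_{s,h,m})/u_{s,h,m}\big)\le e\,E(1/\delta_{s,h,m})$; and by independence of the $\tau_{s,h,k}$, $E(1/\delta_{s,h,m})=\prod_{k=1}^{m}E(1/\tau_{s,h,k})=a^{m}$, where $a=E(1/\tau_{s,h,1})<1$.

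Combining these, $\sum_{m=d+1}^{D}E(\alpha_{s,h,m}^{2})\le be\sum_{m=d+1}^{\infty}a^{m}=be\,a^{d+1}/(1-a)\le C_{2}\,b\,a^{d}/(1-a)$, so Markov's inequality yields $Pr\{d_{\infty}(\boldsymbol{\Omega}_{s,h},\boldsymbol{\Omega}^{d}_{s,h})>\epsilon\}\le C_{1}C_{2}\,b\,a^{d}/\{\epsilon(1-a)\}$; tracking the constants through the two estimates one may take $C_{1}C_{2}=6$. The side condition $d>2\log\{b/\epsilon(1-a)\}/\log(1/a)$ plays no role in the derivation itself — it simply identifies the range of $d$ for which the right-hand side is less than one, so that the bound is informative.

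The step I expect to be the main obstacle is the conditional moment estimate $E\!\big((1-u)/u\mid\delta\big)=O(1/\delta)$: one has to control a ratio of incomplete-Gamma-type integrals over $(0,1)$ with explicit, dimension-free constants, and must make sure the bound holds uniformly over all $\delta\ge1$ so that it survives integration against the law of the random product $\delta_{s,h,m}$. The remaining ingredients — the rank-one decomposition of the error, the orthonormality estimate on the row-sum norm, and the geometric summation — are routine.
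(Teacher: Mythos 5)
Your proposal is correct and shares the paper's overall architecture: write $\boldsymbol{\Omega}_{s,h}-\boldsymbol{\Omega}^{d}_{s,h}$ as the tail sum of rank-one terms, reduce $d_{\infty}$ to $\sum_{m>d}\alpha_{s,h,m}^{2}$ via orthonormality, apply Markov, and exploit $\alpha_{s,h,m}^{2}=\sigma_{s}^{2}(1-u_{s,h,m})/u_{s,h,m}$ together with $E(1/\delta_{s,h,m})=a^{m}$. Where you genuinely depart from the paper is in the key conditional moment bound. The paper evaluates $E[(1/u-1)\mid\tau]$ essentially exactly: it integrates by parts, expands $A\gamma(A+1,1)$ as an infinite series of ratios of Gamma functions, lower-bounds that series by $1/2$ and upper-bounds its complement by $1/A$, arriving at $E[(1/u-1)\mid\tau]\le 3/A$. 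Your sandwich --- numerator $\int_{0}^{1}x^{\delta-1}(1-x)e^{-x}dx\le 1/\{\delta(\delta+1)\}$, denominator $\ge e^{-1}/(\delta+1)$, giving $e/\delta$ --- is shorter, avoids the incomplete-Gamma series entirely, and is even slightly sharper ($e<3$); the step you flagged as the main obstacle is in fact the easy part of your route. Your endgame also differs: the paper detours through the inequality $(1-x/2)>e^{-x}$ and back via $1-e^{-y}\le y$, which is where its factor $2$ (hence the constant $6$) and the side condition on $d$ enter; your direct Markov bound gives $3ba^{d}/\{\epsilon(1-a)\}<6ba^{d}/\{\epsilon(1-a)\}$ without needing the restriction on $d$ at all, and your reading of that restriction as merely the range where the bound is informative is the right one. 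One caveat you share with the paper: the claim that the maximum absolute \emph{row sum} of $\sum_{m>d}\alpha_{m}^{2}\boldsymbol{\psi}_{m}\boldsymbol{\psi}_{m}^{T}$ is controlled by $C_{1}\sum_{m>d}\alpha_{m}^{2}$ with an absolute constant is not true for that norm (summing $D$ entries can cost a factor of $D$ or $\sqrt{D}$); what both you and the paper actually bound, with constant $1$, is the entrywise maximum $\max_{i,j}|a_{i,j}|$, which is what the paper's own proof silently substitutes for its stated definition of $\|\cdot\|_{\infty}$. If you state $d_{\infty}$ as the max-entry norm your argument is complete as written.
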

\begin{figure}
\begin{centering}
\includegraphics[width=0.70\textwidth]{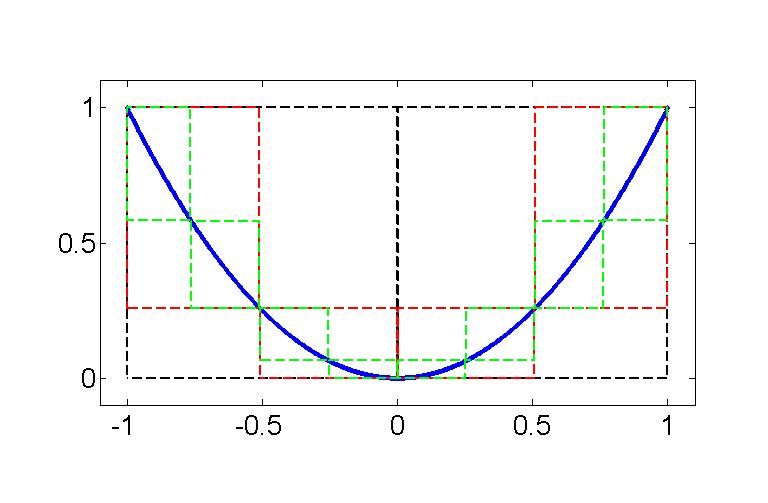}
\par\end{centering}
\caption{ A 4 level binary tree decomposition of a parabola using METIS, with the black rectangular denoting the second level cells, the red denoting the third level cells and the green denoting the leaf cells.}
\label{fig:TreeStructureDescription}
\end{figure}
\begin{proof}
With a slight abuse of notation, we write $u_{s,h,k}$ as $u$ and let $A=\prod_{m=1}^{K}\tau_{s,h,m}$. Let $\boldsymbol{\triangle}_{d}=\boldsymbol{\Psi}\boldsymbol{\Sigma}_{s,h}\boldsymbol{\Psi}^{T}-\boldsymbol{\Psi}^d\boldsymbol{\Sigma}^{d}_{s,h}(\boldsymbol{\Psi}^d)^{T}$, $\boldsymbol{\triangle}_{d}=\{a_{i.j}\}$ and $\boldsymbol{\Psi}=\{\psi_{i,j}\}$.
Clearly, $d_{\infty}(\boldsymbol{\Omega}_{s,h},\boldsymbol{\Omega}^{d}_{s,h})=\max_{1\leq i,j\leq D}|a_{i,j}^{d}|$,
and $a_{i,j}^{d}=\sum_{k=d+1}^{D}\alpha_{k}^{2}\psi_{i,k}\psi_{j,k}$.
By Cauchy-Schwartz inequality, 
\[
|\sum_{k=d+1}^{D}\alpha_{k}^{2}\psi_{i,k}\psi_{j,k}|\leq\max_{1\leq m\leq D}(\sum_{k=H+1}^{D}\alpha_{k}^{2}\psi_{m,k}^{2}).
\]
Since $\boldsymbol{\Psi}$ is orthonormal, we have $\psi_{i,j}^{2}\leq1$
for any $i$ and $j$. Hence
\[
d_{\infty}(\boldsymbol{\Omega}_{s,h},\boldsymbol{\Omega}^{d}_{s,h})\leq\sum_{k=d+1}^{D}\alpha_{k}^{2}.
\]
For a fixed $\epsilon>0$, by Chebyshev's inequalities
\begin{eqnarray*}
p\{d_{\infty}(\boldsymbol{\Omega}_{s,h},\boldsymbol{\Omega}^{d}_{s,h})\leq\epsilon\} & \geq & p\bigg\{\sum_{k=d+1}^{D}\alpha_{k}^{2}\leq\epsilon\bigg\}\\
 & = & E\bigg\{ p(\sum_{k=d+1}^{D}\alpha_{k}^{2}\leq\epsilon|\tau)\bigg\}\\
 & = & 1-E\bigg\{ p(\sum_{k=d+1}^{D}\alpha_{k}^{2}>\epsilon|\tau)\bigg\}\\
 & \geq & 1-E\bigg\{\frac{E(\sum_{k=d+1}^{D}\alpha_{k}^{2}|\tau)}{\epsilon}\bigg\}.
\end{eqnarray*}
By design we have $u\sim\mbox{Ga}_{(0,1)}(A+1,1)$
and $u$ and $\sigma_{s}^{2}$ are conditionally independent,
hence
\begin{eqnarray*}
E[(\frac{1}{u}-1)\sigma_{s}^{2}|\tau] & = & E[(\frac{1}{u}-1)|\tau]E(\sigma_{s}^{2}).
\end{eqnarray*}
Then we have
\begin{eqnarray*}
E[(\frac{1}{u}-1)|\tau] & = & \frac{\int_{0}^{1}(1/u-1)\frac{u^{A}}{\Gamma(A+1)}e^{-u}\mbox{d}u}{\int_{0}^{1}\frac{u^{A}}{\Gamma(A+1)}e^{-u}\mbox{d}u} = \frac{\int_{0}^{1}1/u\times u^{A}e^{-u}\mbox{d}u}{\int_{0}^{1}u^{A}e^{-u}\mbox{d}u}-1\\
 & = & \frac{\int_{0}^{1}u^{A-1}e^{-u}\mbox{d}u}{\int_{0}^{1}u^{A}e^{-u}\mbox{d}u}-1 = \frac{\frac{1}{A}u^{A}e^{-u}|_{0}^{1}+\int_{0}^{1}\frac{1}{A}u^{A}e^{-u}\mbox{d}u}{\int_{0}^{1}u^{A}e^{-u}\mbox{d}u}-1\\
 & = & \frac{e^{-1}}{A\int_{0}^{1}u^{A}e^{-u}\mbox{d}u}-1+\frac{1}{A}.
\end{eqnarray*}
Let $\gamma(s,x)=\int_{0}^{x}t^{s-1}e^{-t}\mbox{d}t$ be the lower incomplete Gamma function. Note that,
\begin{eqnarray*}
A\gamma(A+1,1) & = & \frac{A}{A+1}u^{A+1}e^{-u}|_{0}^{1}+\frac{A}{A+1}\gamma(A+2,1)\\
 & = & \frac{A}{A+1}e^{-1}+\frac{A}{A+1}\bigg[\frac{1}{A+2}e^{-1}+\frac{1}{A+2}\gamma(A+3,1)\bigg]\\
 & = & \lim_{K\to\infty}\bigg\{\sum_{k=1}^{K}\frac{\Gamma(A+1)^{2}}{\Gamma(A)\Gamma(A+k+1)}e^{-1}+A\Gamma(A+1)F(1;A+K,1)\bigg\}\\
 & = & \sum_{k=1}^{\infty}\frac{\Gamma(A+1)^{2}}{\Gamma(A)\Gamma(A+k+1)}e^{-1}\\
 & = & \sum_{k=1}^{\infty}\frac{A}{(A+1)(A+2)\dots(A+k)}e^{-1}
\end{eqnarray*}
where $F(x;a,b)$ is the cdf of $\mbox{Ga}(a,b)$ and $\lim_{a=\infty}F(1;a,1)=0$.
Furthermore we have 
\begin{eqnarray*}
\sum_{k=1}^{\infty}\frac{\Gamma(A+1)^{2}}{\Gamma(A)\Gamma(A+k+1)} & = \sum_{k=1}^{\infty}\frac{A}{(A+1)(A+2)\dots(A+k)} & \geq 1/2,
\end{eqnarray*}
and
\begin{eqnarray*}
1-\sum_{k=1}^{\infty}\frac{\Gamma(A+1)^{2}}{\Gamma(A)\Gamma(A+k+1)} & \leq 1-\frac{A}{A+1} &\leq \frac{1}{A},
\end{eqnarray*}
thus we have 
\begin{eqnarray*}
\frac{e^{-1}}{A\int_{0}^{1}u_{s,h,k}^{A}e^{-u_{h}}\mbox{d}u_{s,h,k}}-1+\frac{1}{A} & = & \frac{1}{\sum_{k=1}^{\infty}\frac{\Gamma(A+1)^{2}}{\Gamma(A)\Gamma(A+k+1)}}-1+\frac{1}{A}\\
 & = & \frac{1-\sum_{k=1}^{\infty}\frac{\Gamma(A+1)^{2}}{\Gamma(A)\Gamma(A+k+1)}}{\sum_{k=1}^{\infty}\frac{\Gamma(A+1)^{2}}{\Gamma(A)\Gamma(A+k+1)}}+\frac{1}{A}\\
 & \leq & \frac{1/A}{1/2}+\frac{1}{A}\\
 & = & \frac{3}{A}.
\end{eqnarray*}
Hence $E[(\frac{1}{u}-1)|\tau]\leq3/(\prod_{m=1}^{k}\tau_{s,h,m})$.
Based on this inequality, we have
\begin{eqnarray*}
\sum_{k=d+1}^{D}E\bigg\{ E[(\frac{1}{u}-1)\sigma_{s}^{2}|\tau]\bigg\} & \leq \sum_{k=d+1}^{D}E\bigg(\frac{3}{\prod_{m=1}^{k}\tau_{s,h,m}}\bigg)E(\sigma_{s}^{2})& \\
 & = \sum_{k=d+1}^{D}3ba^{k} \leq \frac{3ba^{d}}{1-a}&
\end{eqnarray*}
where $b=E(\sigma_{s}^{2})$ and a = $E(\frac{1}{\tau_{s,h,1}})$. Note that
$\tau_{s,h,m}\sim\mbox{Exp}_{[1,\infty)}(\lambda)$, thus $a<1$. By Fubini's
theorem, $E\bigg\{ E(\sum_{k=H+1}^{\infty}\alpha_{k}^{2}|\tau)\bigg\}=\sum_{k=d+1}^{\infty}E\bigg\{ E[(\frac{1}{u_{s,h,k}}-1)\sigma_{s}^{2}|\tau]\bigg\}$. Now use inequality $(1-x/2)>\exp (-x)$ if $0<x\leq1.5$
to get
\[
p\{d_{\infty}(\boldsymbol{\Omega}_{s,h},\boldsymbol{\Omega}^{d}_{s,h})\leq\epsilon\}\geq\exp\{\frac{-6ba^{d}}{\epsilon(1-a)}\}
\]
if $d>2\log\{b/\epsilon(1-a)\}/\log(1/a)$. Hence,
\[
p\{d_{\infty}(\boldsymbol{\Omega}_{s,h},\boldsymbol{\Omega}^{d}_{s,h})>\epsilon\}\leq1-\exp\{\frac{-6ba^{d}}{\epsilon(1-a)}\}\leq\frac{6ba^{d}}{\epsilon(1-a)},
\]
since $6ba^{d}/\{\epsilon(1-a)\}<1$.\end{proof}

\begin{thm}
\label{ScaleTruncationBound}Let
\[
f^{L}(\boldsymbol{y}_{i})=\sum_{s=1}^{L}\sum_{h=1}^{2^{s}}\tilde{\pi}_{s,h}\mathcal{N}_{D}(\boldsymbol{y}_{i};\boldsymbol{\mu}_{s,h},\boldsymbol{\Phi}_{s,h}\boldsymbol{\Sigma}_{s,h}\boldsymbol{\Phi}_{s,h}^{T}+\sigma_{s}^{2}\boldsymbol{I})
\]
denote the approximation at scale $L$, let $P(B)=\int_{B}f(\boldsymbol{y}_{i})dy$
and $P^{L}(B)=\int_{B}f^{L}(\boldsymbol{y}_{i})dy$, for all $B \subset \Re^D$ denote
the probability measures corresponding to density $f(\boldsymbol{y}_{i})$ and $f^{L}(\boldsymbol{y}_{i})$.
Then we have,
\[
d_{TV}(P_{L},P)<\bigg(\frac{a_{S}}{1+a_{S}}\bigg)^{L},
\]
where $d_{TV}(P_{L},P)$ denotes the total variation distance between $P_{L}(B)$ and $P(B)$.
\end{thm}
\begin{proof}
The total variation distance is given by
\begin{eqnarray*}
d_{TV}(P_{L},P) & = & \sup_{B\in\Re^{D}}|P^{L}(B)-P(B)|\\
 & = & \sup_{B\in\Re^{D}}|\sum_{h=1}^{2^{L}}\tilde{\pi}_{s,h}N(B;\boldsymbol{\mu}_{s,h},\boldsymbol{\Phi}_{s,h}\boldsymbol{\Sigma}_{s,h}\boldsymbol{\Phi}_{s,h}^{T}+\sigma_{s}^{2}\boldsymbol{I})-...\\
 &  & \sum_{s=L}^{\infty}\sum_{h=1}^{2^{s}}\pi_{s,h}N(B;\boldsymbol{\mu}_{s,h},\boldsymbol{\Phi}_{s,h}\boldsymbol{\Sigma}_{s,h}\boldsymbol{\Phi}_{s,h}^{T}+\sigma_{s}^{2}\boldsymbol{I})|\\
 & \leq & \max\{\sum_{h=1}^{2^{L}}\tilde{\pi}_{s,h},\sum_{s=L}^{\infty}\sum_{h=1}^{2^{s}}\pi_{s,h}\}\\
 & = & \max\bigg\{2^{L}(\frac{a_{S}}{1+a_{S}})^{L-1}\frac{1}{1+a_{S}}2^{-L},\sum_{s=L}^{\infty}2^{s}\frac{1}{1+a_{S}}(\frac{a_{S}}{2+2a_{S}})^{s}\bigg\}\\
 & = & \sum_{s=L}^{\infty}\frac{1}{1+a_{S}}(\frac{a_{S}}{1+a_{S}})^{s}\\
 & = & (\frac{a_{S}}{1+a_{S}})^{L}
\end{eqnarray*}
\end{proof}

\section{Posterior Conditional Derivation}

Based on the likelihood function (\ref{eq:Likelihood}), the derivation of conditional posterior of $\sigma_{s}^{-2}$ is given
by
\begin{eqnarray*}
p(\sigma_{s}^{-2}|-) & \propto & (\sigma_{s}^{-2})^{a_{\sigma}-1}\exp(-b_{\sigma}\sigma_{s}^{-2})\prod_{y_{i}\in C_{s}}(\sigma_{s}^{2})^{-D/2}\\
 &  & \exp\bigg\{-\frac{1}{2}\sigma_{s}^{-2}(A_{s,h,i}-\sum_{j=1}^{d}(1-u_{s,h,j})(Z_{s,h,i}^{(j)})^{2})\bigg\}\\
 & \propto & (\sigma_{s}^{-2})^{Dn_{s}/2+a_{\sigma}-1}\\
 &  & \exp\bigg\{-\sigma_{s}^{-2}[\frac{1}{2}\sum_{y_{i}\in C_{s}}(A_{s,h,i}-\sum_{j=1}^{d}(1-u_{s,h,j})(Z_{s,h,i}^{(j)})^{2})+b_{\sigma}]\bigg\}.
\end{eqnarray*}
The derivation of conditional posterior of $u_{s,h,m}$ is given by
\begin{eqnarray*}
p(u_{s,h,m}|-) & \propto & \prod_{y_{i}\in C_{s,h}}u_{s,h,m}^{1/2}\exp\bigg\{-\frac{1}{2}\sigma_{s}^{-2}u_{s,h,m}(Z_{s,h,i}^{(m)})^{2}\bigg\}\\
 &  & u_{s,h,m}^{\prod_{j=1}^{m}\tau_{s,h,j}-1}\exp\{-u_{s,h,m}\}I_{(0,1)}\\
 & \propto & u_{m,s,h}^{\prod_{j=1}^{m}\tau_{s,h,j}+n_{s,h}/2-1}\\
 &  & \exp\bigg\{-[1+\frac{1}{2}\sigma_{s}^{-2}\sum_{y_{i}\in C_{s,h}}(Z_{s,h,i}^{(m)})^{2}]u_{s,h,m}\bigg\} I_{(0,1)}.
\end{eqnarray*}
The derivation of conditional posterior of $\tau_{s,h,m}$ is given by
\begin{eqnarray*}
p(\tau_{s,h,m}|-) & \propto & (\prod_{j>m-1}u_{j,s,h})^{\tau_{s,h,j}}\exp\{-a_{\tau}\tau_{s,h,m}\}I_{[1,\infty)}\\
 & \propto & \exp\bigg\{-[a_{\tau}-ln(\prod_{j>m-1}u_{s,h,j})]\tau_{s,h,m}\bigg\}
\end{eqnarray*}

\section{Missing Data Imputation}

\begin{prop}
\label{Prop MD}
For node $(s,h)$, introduce augmented data $\boldsymbol{\eta}_{i}$ such that $(\boldsymbol{y}_{i}|\boldsymbol{\eta}_{i},\boldsymbol{\Theta},s_{i}=s,h_{i}=h) \sim \mathcal{N}_{D}(\boldsymbol{\mu}_{s,h}+\boldsymbol{\Phi}_{s,h}\boldsymbol{\eta}_{i},\sigma_{s}^{2}\boldsymbol{\mbox{I}}_{D})$ and $(\boldsymbol{\eta}_{i}|\boldsymbol{\Theta},s_{i}=s,h_{i}=h) \sim \mathcal{N}_{d}(0,\boldsymbol{\Sigma}_{s,h})$, for $i=1,\dots,n$. Then
we have the conditional distribution with $\boldsymbol{\eta}_{i}$ marginalized out equal $(\boldsymbol{y}_{i}|\boldsymbol{\Theta},s_{i}=s,h_{i}=h) \sim \mathcal{N}_{D}(\boldsymbol{\mu}_{s,h}+\boldsymbol{\Phi}_{s,h}\boldsymbol{\Sigma}_{s,h}\boldsymbol{\Phi}_{s,h}^{T},\sigma_{s}^{2}\boldsymbol{\mbox{I}}_{D})$. Furthermore, conditional on $s_{i}=s$ and $h_{i}=h$ we have
\begin{eqnarray*}
\boldsymbol{\eta}_{i}|\boldsymbol{y}_{O},\boldsymbol{\Theta} \sim \mathcal{N}_{d}(\hat{\boldsymbol{\mu}}_{\eta},\hat{\boldsymbol{\Sigma}}_{\eta}), & \mbox{ }& \boldsymbol{y}_{M}|\boldsymbol{\eta}_{i},\boldsymbol{y}_{O},\boldsymbol{\Theta} \sim \mathcal{N}_{m_{i}}(\boldsymbol{\mu}_{M}+\boldsymbol{\Phi}_{M}\boldsymbol{\eta}_{i},\sigma_{s}^{2}\mbox{I}_{m_{i}}),
\end{eqnarray*}
where $\hat{\boldsymbol{\Sigma}}_{\eta}=\big(\boldsymbol{\Sigma}_{s,h}\boldsymbol{\Phi}_{O}^{T}\boldsymbol{\Phi}_{O}/\sigma_{s}^{2}+\mbox{I}\big)^{-1}\boldsymbol{\Sigma}_{s,h}$
and $\hat{\boldsymbol{\mu}}_{\eta}=\hat{\boldsymbol{\Sigma}}_{\eta}\boldsymbol{\Phi}_{O}^{T}(\boldsymbol{y}_{O}-\boldsymbol{\mu}_{O})/\sigma_{s}^{2}$,
\end{prop}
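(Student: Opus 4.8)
The plan is to exploit the linear--Gaussian structure of the augmented hierarchy and reduce the statement to two textbook facts: marginalizing a Gaussian latent vector out of a conditionally Gaussian observation yields a Gaussian, and the posterior of a Gaussian ``coefficient'' vector under a Gaussian likelihood and a Gaussian prior is Gaussian with precision equal to the sum of the prior and likelihood precisions. Throughout, condition on $s_i=s$, $h_i=h$ and write the augmented model equivalently as $\boldsymbol{y}_i = \boldsymbol{\mu}_{s,h} + \boldsymbol{\Phi}_{s,h}\boldsymbol{\eta}_i + \boldsymbol{\epsilon}_i$ with $\boldsymbol{\eta}_i \sim \mathcal{N}_d(0,\boldsymbol{\Sigma}_{s,h})$ independent of $\boldsymbol{\epsilon}_i \sim \mathcal{N}_D(0,\sigma_s^2\boldsymbol{I}_D)$. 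Since $\boldsymbol{y}_i$ is then an affine function of the jointly Gaussian pair $(\boldsymbol{\eta}_i,\boldsymbol{\epsilon}_i)$, it is Gaussian; taking expectations gives mean $\boldsymbol{\mu}_{s,h}$ and, by independence, covariance $\boldsymbol{\Phi}_{s,h}\boldsymbol{\Sigma}_{s,h}\boldsymbol{\Phi}_{s,h}^{T}+\sigma_s^2\boldsymbol{I}_D$, i.e. the single--node model (\ref{eq:SingleNode}). This proves the first assertion.

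Next, partition the $D$ rows into the observed set $O$ and the missing set $M$. Because $\boldsymbol{\epsilon}_i$ has isotropic covariance, its sub-vectors $\boldsymbol{\epsilon}_{i,O}$ and $\boldsymbol{\epsilon}_{i,M}$ are independent with laws $\mathcal{N}(0,\sigma_s^2\boldsymbol{I})$, and the model restricts row-wise to $\boldsymbol{y}_O = \boldsymbol{\mu}_O + \boldsymbol{\Phi}_O\boldsymbol{\eta}_i + \boldsymbol{\epsilon}_{i,O}$ and $\boldsymbol{y}_M = \boldsymbol{\mu}_M + \boldsymbol{\Phi}_M\boldsymbol{\eta}_i + \boldsymbol{\epsilon}_{i,M}$. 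Viewing the first of these as a Gaussian linear model for the unknown $\boldsymbol{\eta}_i$ with prior $\mathcal{N}_d(0,\boldsymbol{\Sigma}_{s,h})$, conjugacy (complete the square in the log-density, using the same matrix manipulations as behind Proposition \ref{Quadratic}) yields a $\mathcal{N}_d$ posterior with precision $\boldsymbol{\Sigma}_{s,h}^{-1} + \boldsymbol{\Phi}_O^{T}\boldsymbol{\Phi}_O/\sigma_s^2$ and mean equal to the posterior covariance times $\boldsymbol{\Phi}_O^{T}(\boldsymbol{y}_O-\boldsymbol{\mu}_O)/\sigma_s^2$. The one genuine computation is to recast the covariance: factoring $\boldsymbol{\Sigma}_{s,h}^{-1}$ to the left gives $\big(\boldsymbol{\Sigma}_{s,h}^{-1}+\boldsymbol{\Phi}_O^{T}\boldsymbol{\Phi}_O/\sigma_s^2\big)^{-1} = \big(\boldsymbol{I}+\boldsymbol{\Sigma}_{s,h}\boldsymbol{\Phi}_O^{T}\boldsymbol{\Phi}_O/\sigma_s^2\big)^{-1}\boldsymbol{\Sigma}_{s,h} = \hat{\boldsymbol{\Sigma}}_\eta$, after which the mean is exactly $\hat{\boldsymbol{\mu}}_\eta$.

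Finally, conditioning in addition on $\boldsymbol{\eta}_i$, the representation $\boldsymbol{y}_M = \boldsymbol{\mu}_M + \boldsymbol{\Phi}_M\boldsymbol{\eta}_i + \boldsymbol{\epsilon}_{i,M}$ shows that $\boldsymbol{y}_M$ depends on the rest of the data only through $\boldsymbol{\epsilon}_{i,M}$, which is independent of $(\boldsymbol{\eta}_i,\boldsymbol{\epsilon}_{i,O})$ and hence of $(\boldsymbol{\eta}_i,\boldsymbol{y}_O)$; therefore $\boldsymbol{y}_M\mid\boldsymbol{\eta}_i,\boldsymbol{y}_O,\boldsymbol{\Theta}\sim\mathcal{N}_{m_i}(\boldsymbol{\mu}_M+\boldsymbol{\Phi}_M\boldsymbol{\eta}_i,\sigma_s^2\boldsymbol{I}_{m_i})$, and composing this draw after the one from the previous step samples from $p(\boldsymbol{y}_M\mid\boldsymbol{y}_O,\boldsymbol{\Theta})$.

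I do not expect a real obstacle, since the result is a routine conjugate-Gaussian computation; the two points that deserve a line of care are (i) checking that restricting the model to the observed rows is legitimate — this is precisely where isotropy of $\boldsymbol{\epsilon}_i$ is used, so that the missing rows integrate out without coupling to the observed ones — and (ii) carrying out the matrix identity that turns the natural precision-parametrized posterior covariance into the stated form $\hat{\boldsymbol{\Sigma}}_\eta$.
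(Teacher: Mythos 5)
Your proof is correct and follows essentially the same route as the paper's: both reduce to the standard conjugate linear--Gaussian computation, factoring the joint of $(\boldsymbol{y}_O,\boldsymbol{y}_M,\boldsymbol{\eta}_i)$ so that the missing rows decouple (your isotropy observation is exactly why the paper's exponent splits into separate $O$ and $M$ quadratic terms) and then completing the square in $\boldsymbol{\eta}_i$. You are in fact slightly more explicit than the paper, which leaves the final square-completion and the identity $\big(\boldsymbol{\Sigma}_{s,h}^{-1}+\boldsymbol{\Phi}_O^{T}\boldsymbol{\Phi}_O/\sigma_s^2\big)^{-1}=\big(\boldsymbol{I}+\boldsymbol{\Sigma}_{s,h}\boldsymbol{\Phi}_O^{T}\boldsymbol{\Phi}_O/\sigma_s^2\big)^{-1}\boldsymbol{\Sigma}_{s,h}$ implicit.
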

\begin{proof}
The proposition can be easily proved using Bayes rule. The joint density
of $(\boldsymbol{y}_{O},\boldsymbol{y}_{M},\boldsymbol{\eta}_{i}|\boldsymbol{\Theta})$
is given by
\begin{eqnarray*}
p(\boldsymbol{y}_{O},\boldsymbol{y}_{M},\boldsymbol{\eta}_{i}|\boldsymbol{\Theta},s_{i}=s,h_{i}=h) & \propto & \exp\bigg\{-\frac{\|\boldsymbol{y}_{i}-\boldsymbol{\Phi}\boldsymbol{\eta}_{i}-\boldsymbol{\mu}\|_{2}}{2\sigma_{s}^{2}}-\frac{\boldsymbol{\eta}_{i}^{T}\boldsymbol{\Sigma}_{s,h}^{-1}\boldsymbol{\eta}_{i}}{2}\bigg\}\\
 & \propto & \exp\bigg\{-\frac{\|\boldsymbol{y}_{M}-\boldsymbol{\Phi}_{M}\boldsymbol{\eta}_{i}-\boldsymbol{\mu}_{M}\|_{2}}{2\sigma_{s}^{2}}\\
 &  & -\frac{\|\boldsymbol{y}_{O}-\boldsymbol{\Phi}_{O}\boldsymbol{\eta}_{i}-\boldsymbol{\mu}_{O}\|_{2}}{2\sigma_{s}^{2}}-\frac{\boldsymbol{\eta}_{i}^{T}\boldsymbol{\Sigma}_{s,h}^{-1}\boldsymbol{\eta}_{i}}{2}\bigg\}.
\end{eqnarray*}
Hence the conditional density $(\boldsymbol{y}_{M}|\boldsymbol{\eta}_{i},\boldsymbol{y}_{O},\boldsymbol{\Theta},s_{i}=s,h_{i}=h)$
is given by
\begin{eqnarray*}
p(\boldsymbol{y}_{M}|\boldsymbol{\eta}_{i},\boldsymbol{y}_{O},\boldsymbol{\Theta},s_{i}=s,h_{i}=h) & \propto & \exp\bigg\{-\frac{\|\boldsymbol{y}_{M}-\boldsymbol{\Phi}_{M}\boldsymbol{\eta}_{i}-\boldsymbol{\mu}_{M}\|_{2}}{2\sigma_{s}^{2}}\bigg\}.
\end{eqnarray*}
The marginal conditional density $(\boldsymbol{\eta}_{i}|\boldsymbol{y}_{O},\boldsymbol{\Theta},s_{i}=s,h_{i}=h)$
is given by 
\begin{eqnarray*}
\mbox{p}(\boldsymbol{\eta}_{i}|\boldsymbol{y}_{i}^{O},\boldsymbol{\Theta},s_{i}=s,h_{i}=h) & \propto & \int\mbox{p}(\boldsymbol{y}_{M},\boldsymbol{\eta}_{i}|\boldsymbol{y}_{O})\mbox{d}\boldsymbol{y}_{M}\\
 & \propto & \exp\bigg\{\frac{\|\boldsymbol{y}_{O}-\boldsymbol{\Phi}_{O}\boldsymbol{\eta}_{i}-\boldsymbol{\mu}_{O}\|_{2}}{2\sigma_{s}^{2}}-\frac{\boldsymbol{\eta}_{i}^{T}\boldsymbol{\Sigma}_{s,h}^{-1}\boldsymbol{\eta}_{i}}{2}\bigg\}.
\end{eqnarray*}
\end{proof}

To finish the missing data imputation algorithm, the conditional posterior distribution of the membership variable $(s_{i},h_{i})$ of partially observed subject $i$, $p(s_{i},h_{i}|\boldsymbol{y}_{O},\boldsymbol{\Theta})$ is needed. $\boldsymbol{y}_{M}$ has been marginalized out to reduce the sample autocorrelation, and the distribution is given by
\begin{eqnarray}
\mbox{p}(s_{i},h_{i}|\boldsymbol{y}_{O},\boldsymbol{\Theta}) & \propto & \int p(\boldsymbol{y}_{M},\boldsymbol{y}_{O},\boldsymbol{\Theta},s_{i},h_{i})\mbox{d}\boldsymbol{y}_{M} \nonumber\\
 & \propto & \int\pi_{s_{i},h_{i}}\mathcal{N}_{D}(\boldsymbol{y}_{i};\boldsymbol{\mu}_{s_{i},h_{i}},\boldsymbol{\Phi}_{s_{i},h_{i}}\boldsymbol{\Sigma}_{s_{i},h_{i}}\boldsymbol{\Phi}_{s_{i},h_{i}}^{T}+\sigma_{s_{i}}^{2}\boldsymbol{\mbox{I}})\mbox{d}\boldsymbol{y}_{M}.\nonumber
\end{eqnarray}
With a slight abuse of notation, we write $\boldsymbol{\Phi}$ as
$\boldsymbol{\Phi}_{s_{i},h_{i}}$, $\boldsymbol{\Sigma}$ as $\boldsymbol{\Sigma}_{s_{i},h_{i}}$,
$\sigma^{2}$ denote $\sigma_{s_{i}}^{2}$ and $\boldsymbol{\mu}$
as $\boldsymbol{\mu}_{s_{i},h_{i}}$. By properties of multicariate
Gaussian, we have
\begin{equation*}
\int\mathcal{N}_{D}(\boldsymbol{y}_{i};\boldsymbol{\mu},\boldsymbol{\Phi}\boldsymbol{\Sigma}\boldsymbol{\Phi}^{T}+\sigma^{2}\mbox{I})\mbox{d}\boldsymbol{y}_{M}=\mathcal{N}_{D-m_{i}}(\boldsymbol{y}_{O};\boldsymbol{\mu}_{O},\boldsymbol{\Phi}_{O}\boldsymbol{\Sigma}\boldsymbol{\Phi}_{O}^{T}+\sigma^{2}\boldsymbol{I}).
\end{equation*}
Hence we have $p(s_{i},h_{i}|\boldsymbol{y}_{O},\boldsymbol{\Theta})\propto\mathcal{N}_{D-m_{i}}(\boldsymbol{y}_{O};\boldsymbol{\mu}_{O},\boldsymbol{\Phi}_{O}\boldsymbol{\Sigma}\boldsymbol{\Phi}_{O}^{T}+\sigma^{2}\boldsymbol{I})$. Directly computing this value includes inverting a $(D-m_{i})\times(D-m_{i})$ matrix, which
is computational intractable when $D-m_{i}$ is large. With basic linear algebra, we have
\begin{equation*}
\big|\boldsymbol{\Phi}_{O}\boldsymbol{\Sigma}\boldsymbol{\Phi}_{O}^{T}+\sigma^{2}\boldsymbol{I}\big|=(\sigma^{2})^{D-m_{i}}\big|\boldsymbol{I}+\boldsymbol{\Sigma}\boldsymbol{\Phi}_{O}^{T}\boldsymbol{\Phi}_{O}/\sigma^{2}\big|,
\end{equation*}
\begin{equation*}
\big(\boldsymbol{\Phi}_{O}\boldsymbol{\Sigma}\boldsymbol{\Phi}_{O}^{T}+\sigma^{2}\boldsymbol{I}\big)^{-1}=\frac{\boldsymbol{I}}{\sigma^{2}}-\frac{\boldsymbol{\Phi}_{O}\big(\boldsymbol{I}+\boldsymbol{\Sigma}\boldsymbol{\Phi}^{T}\boldsymbol{\Phi}_{O}/\sigma^{2}\big)^{-1}\boldsymbol{\Sigma}\boldsymbol{\Phi}_{O}^{T}}{\sigma^{4}}.
\end{equation*}
Hence we have
\begin{equation}
\tag{A2}
\label{eq:ClPr} 
\begin{aligned}
& \mathcal{N}_{D-m_{i}}(\boldsymbol{y}_{O};\boldsymbol{\mu}_{O},\boldsymbol{\Phi}_{O}\boldsymbol{\Sigma}\boldsymbol{\Phi}_{O}^{T}+\sigma^{2}\boldsymbol{I}) \\
= & (2\pi\sigma^{2})^{-(D-m_{i})/2}\big|\boldsymbol{I}+\boldsymbol{\Sigma} \boldsymbol{A}/\sigma^{2}\big|^{-1/2}\\
& \times\exp\bigg\{-\frac{B_{i}}{2\sigma^{2}}+\frac{\boldsymbol{C}_{i}^{T}\big(\boldsymbol{\Sigma}^{-1}+\boldsymbol{A}/\sigma^{2}\big)^{-1}\boldsymbol{C}_{i}}{2\sigma^{4}}\bigg\}
\end{aligned}
\end{equation}

where $\boldsymbol{A}=\boldsymbol{\Phi}_{O}^{T}\boldsymbol{\Phi}_{O}$, $B_{i}=\|\boldsymbol{y}_{O}-\boldsymbol{\mu}_{O}\|_{2}$
and $\boldsymbol{C}_{i}=\boldsymbol{\Phi}_{O}^{T}(\boldsymbol{y}_{O}-\boldsymbol{\mu}_{O})$.
Note that $\boldsymbol{A}$, $B_{i}$ and $\boldsymbol{C}_{i}$ can be computed
before the MCMC algorithm with a computational cost being $O\big((D-m_{i}) d\big)$. Within the MCMC, the cost to compute (\ref{eq:ClPr})
is only $O(d^3)$.

\section{Simulation Studies}

In the missing data imputation simulatoin study, we simulated 100 independent samples of size $n=600$ from different scenarios as follows.

\begin{enumerate}
\item[] \textbf{Scenario 1-6:} Data $\boldsymbol{y}_{i}$, for $i=1,\dots,600$, were generated from $\mathcal{N}_{D}(0,\boldsymbol{\Lambda}\boldsymbol{\Lambda}^{T}+\sigma^2\boldsymbol{I})$. $\boldsymbol{\Lambda}$ is a $D\times p$ matrix with each entry generated from $\mathcal{N}(0,25)$ and $10\sigma^2$ was generated from $\chi_{(1)}$. This scenario includes different cases where $p\in\{10,50\}$, $D\in\{5000,10000,15000\}$ and with or without a 20\% missing data. We fixed the upper bound to $d = 100$.

\item[] \textbf{Scenario 7-9:} 3\textendash{}D data $\boldsymbol{\eta}_{i}$, for $i=1,\dots,600$, were generated on the Swissroll with Gaussian noise distributed as $\mathcal{N}(0,2.5\times10^{-5})$ along each dimension. Data $\boldsymbol{y}_{i}$, for $i=1,\dots,600$, were obtained by $\boldsymbol{y}_{i}=\boldsymbol{\Lambda}\boldsymbol{\eta}_{i}$ where $\boldsymbol{\Lambda}$ were generated in the same way as in Scenario 1. This scenario includes different cases where $D\in\{5000,10000,15000\}$ and with or without a 20\% missing data. We fixed the upper bound to $d = 10$.
\end{enumerate}

The average inclusion probabilities of each presetted dimensions were computed in the following way. Let $\mathcal{R}_{s,h}^{t}$ denotes the set of retained column indices of node $(s,h)$ at the $t$th iteration, and let $(s_{i}^{t},h_{i}^{t})$ denote the node index of the $i$th observation at the $t$th iteration. Then the inclusion probability of dimension $j=1,2,\dots,10$ in scenario 2 is given by
\begin{equation*}
p_{j}^{inclu}=\frac{1}{n_{adapt}\times N}\sum_{t\mbox{: }adapt}\sum_{i=1}^{N}I_{(j\in\mathcal{R}_{s_{i}^{t},h_{i}^{t}}^{t})}
\end{equation*}
where $n_{adapt}$ denotes the number of adaptation steps during the MCMC collection interval.

\end{document}